\DeclareMathOperator{\rank}{rank}
\newcommand{\Reals}{\mathcal{R}}
\newcommand{\Prob}{\mathbb{P}}
\newcommand{\cB}{\mathcal{B}}      %
\newcommand{\cX}{\mathcal{X}}
\newcommand{\replace}[2]{#2} 
\newtheorem{defi}{Definition}
\crefname{defi}{definition}{definitions}
\Crefname{defi}{Definition}{Definitions}
\crefname{lemma}{lemma}{lemmas}
\Crefname{lemma}{Lemma}{Lemmas}
\crefname{assumption}{assumption}{assumptions}
\Crefname{assumption}{Assumption}{Assumptions}
\patchcmd{\@mn@@@marginnote}{\begingroup}{\begingroup\@twosidefalse}{}{\fail}
\newcommand{\paragraphsummary}[1]{}
\newcommand{\hlabel}{\phantomsection\label}
\crefname{assumption}{assumption}{assumptions}
\Crefname{assumption}{Assumption}{Assumptions}
\newtheorem{theorem}{Theorem}
\newtheorem{lemma}{Lemma}
\newtheorem{proposition}{Proposition}
\newtheorem{remark}{Remark}
\newtheorem{assumption}{Assumption}
\begin{document}

\title{A Novel Noise-Aware Classical Optimizer for Variational Quantum Algorithms} 

\author[a]{Jeffrey Larson}
\author[a]{Matt Menickelly}
\author[b]{Jiahao Shi}

\affil[a]{Mathematics and Computer Science Division, Argonne National Laboratory.  {\tt jmlarson@anl.gov, mmenickelly@anl.gov}}
\affil[b]{Industrial and Operations Engineering Department, University of Michigan. {\tt jiahaos@umich.edu}}
\date{}

\maketitle

\begin{abstract}
A key component of variational quantum algorithms (VQAs) is the choice of classical optimizer employed to update the parameterization of an ansatz. 
It is well recognized that quantum algorithms will, for the foreseeable future, necessarily be run on noisy devices with limited fidelities. 
Thus, the evaluation of an objective function (e.g., the guiding function in the quantum approximate optimization algorithm (QAOA) or the expectation of the electronic Hamiltonian in variational quantum eigensolver (VQE)) 
required by a classical optimizer is subject not only to stochastic error from estimating an expected value but also to error resulting from intermittent hardware noise. 
Model-based derivative-free optimization methods
have emerged as popular choices of a classical optimizer in the noisy VQA setting, based on empirical studies.
However, these optimization methods were not explicitly designed with the consideration of noise. 
In this work we adapt recent developments from the ``noise-aware numerical optimization'' literature to these commonly used derivative-free model-based methods. 
We introduce the key defining characteristics of these novel noise-aware derivative-free model-based methods that separate them from standard model-based methods.
We study an implementation of such noise-aware derivative-free model-based methods and compare its performance on demonstrative VQA simulations to classical solvers packaged in \texttt{scikit-quant}. 
\end{abstract}

\section{Introduction}

\paragraphsummary{Talk about VQA}
Variational quantum algorithms (VQAs) form a class of quantum algorithms and are
a leading method of experimentation for quantum computing researchers and practitioners \citep{Cerezo2021}.
VQAs have quantum circuit representations that are relatively simple and short
and are therefore particularly well suited for near-term quantum computers, which have
limited qubits and are prone to errors \citep{McClean2016,Farhi2014}. 
The ``variational'' nature of VQAs comes from 
the manner in which VQAs combine classical optimization methods with quantum
information techniques. 
The core idea behind VQAs is to use a quantum computer to evaluate a parameterized quantum
circuit; the final quantum state returned by a VQA therefore depends on a given
set of parameters. 
These parameters are adjusted by some classical
numerical optimization algorithm that seeks the optimum of a well-designed and problem-specific cost function. 
\replace{}{Two of the most common VQAs are the quantum approximate optimization algorithm (QAOA)~\cite{Farhi2014} and the variational quantum eigensolver (VQE)~\cite{Peruzzo2014}, which differ in their application and form.
QAOA has shown promise for solving binary optimization problems, for example,
MaxCut variants \citep{Wang2018,Shaydulin2022} and
low autocorrelated binary sequences \citep{Shaydulin2023} problems, where as VQE
has shown promise in performing quantum simulations to identify ground state energies of complex molecules \citep{Kandala2017}.
The efficacy and flexibility of VQAs have piqued the
interest of researchers in chemistry \citep{Grimsley2019,mccaskey2019quantum,yeter2021benchmarking}, materials science \citep{Bauer2020}, and machine learning \citep{Cerezo2021}.}

\paragraphsummary{Talk about DFO in general}
In this manuscript we are particularly interested in the classical numerical optimization method that forms a critical part of a VQA.
While some work suggests exploiting \emph{parameter shifts} \citep{Schuld2019} to compute gradients with respect to the parameters, quantum circuits for computing these gradients are generally very large and hence far more prone to errors resulting from hardware noise. 
Some exploratory work in gradient-based classical optimizers for the VQA setting was performed by \cite{menickelly2023latency, menickelly2023stochastic, kubler2020adaptive, arrasmith2020operator, gu2021adaptive, ito2023latency}. \replace{}{These gradient-based optimizers  offer improved theoretical convergence over derivative-free methods (\cite{Harrow2021})}. 
However, 
in the near term where gate depths are prohibitive, the classical numerical optimization methods used to optimize VQA parameters will likely continue to have access only to cost function values. 
Given the complexity of the cost functions for these quantum systems, the employed optimization methods must be robust and efficient in navigating a landscape that is generally nonconvex and periodic \citep{Shaydulin2019}.
This challenge, central to our motivations in this paper, is compounded by the fact that quantum measurements are inherently probabilistic;
this means that the cost function values used in VQA are generally only statistical estimates.
In particular, one is limited to sampling repeated measurements (called \emph{shots}) of the output state of a quantum circuit; because most cost functions are expectations, a sample average of shot measurements typically yields an unbiased estimate of the cost function. 
However, these estimates necessarily introduce stochastic noise (in addition to, or separate from, any hardware noise) into the optimization process.

\replace{}{\subsection{Existing Methods}}
\paragraphsummary{Stochastic methods should be more appropriate, but everyone uses deterministic methods (for good reason)!}
With the consideration of practically unattainable gradients, practitioners typically use methods for derivative-free optimization (DFO) \citep{LMW2019} as the classical optimizer in a VQA. 
By derivative-free methods we refer to any optimization method that does not require any derivative information to be supplied by the user or the cost function oracle. 
For example, the software package \texttt{scikit-quant} \citep{lavrijsen2020classical} wraps a variety of derivative-free optimization solvers that the authors found to perform well on several benchmark VQE problems.

As discussed, the optimization problem solved on the classical computer is inherently stochastic.
Despite this, the quantum computing community has largely found DFO methods that are explicitly for optimizing stochastic responses to be far less efficient than their counterparts for deterministic responses; for example, none of the solvers wrapped in \texttt{scikit-quant} are intended for stochastic optimization, by the standard definition of stochastic optimization. 
However, as is obvious in theory and is indeed observed in practice, any method designed for a deterministic problem will never resample a cost function at the same parameter setting twice. 
As a result, deterministic methods are certain to eventually ``get stuck''; that is, they will fail to find improvement resulting from small perturbations of the circuit parameters. This ``stuckness'' is likely to happen when the noise becomes relatively large compared with the change in the cost function resulting from a parameter perturbation suggested by the methods. 

\paragraphsummary{Talk methods we are building on. Strength/weakness of Powell's methods}
Some of the most successful optimization methods for \replace{deterministic}{} DFO problems (VQA or otherwise) are slight modifications to the model-based trust-region (MBTR) framework \citep{conn2009introduction}, with notable implementations having been developed by Michael Powell, including the  popular \texttt{BOBYQA}~\citep{powell2009bobyqa}. 
MBTR methods construct and update a quadratic model of the objective function over a dynamically adjusted trust region, typically a norm ball of fixed radius in the parameter space.
MBTR methods allow for efficient approximation of the overall objective function's behavior, thereby guiding the optimization process more reliably, even in the absence of explicit gradient information. The method's quadratic models are interpolatory; these models are especially  useful when the signal from a noisy response is large relative to the noise. 
If the signal-to-noise ratio becomes small, however, the interpolatory models are likely not to model the response but rather provide a model of noise. 
In this low-signal setting, MBTR methods are likely to become stuck.
However, MBTR methods empirically make reasonable progress for as long as the signal-to-noise ratio remains high. 

\replace{}{\subsection{Our Contribution}}
\paragraphsummary{Talk about asymptotic pros/cons for results that we are about to give and how they are complemented by practical considerations.}
In this work we seek to explicitly account for the observation that theoretically, and practically, MBTR methods perform best when the signal-to-noise ratio remains high. 
We accomplish this goal by developing what we call a \emph{noise-aware} MBTR method.
By noise-aware, we broadly refer to any method that effectively requires an \emph{estimate} of the noise level present in the function evaluation. 
We currently leave the meaning of noise level intentionally vague, but for intuition, noise level could refer to any of the following:
\begin{itemize}
\item a deterministic bound on the absolute value of noise, 
\item the standard error when one employs sample means to estimate an expectation value, or
\item the noise level of a deterministically noisy function, as employed in \cite{more2011estimating},
\end{itemize}
among other quantities. 
We note in particular that we do not label our method a \emph{stochastic} MBTR because, for instance, of the three examples presented above, only the second example assumes anything stochastic about the function being estimated. 
In the VQA setting, this choice is motivated by the practical concern that observations of circuit evaluations on a near-term quantum computer are  the result not only of a stochastic calculus but also of hardware noise.  
There is generally no reason to believe that hardware noise satisfies distributional assumptions as pleasant as unbiasedness or parameter independence. 
Moreover, from a theoretical perspective, the analysis of stochastic optimization methods tends to focus on (non)asymptotic convergence rates as a primary concern.
For convergence analyses of various stochastic MBTR methods, see, for example, \citep{stormoriginal,stormrate,Larson2016,Shashaani2018,Chang2013,Augustin2017}. 
All of these methods effectively assume that, in the presence of stochastic noise, one must sample the objective function in each iteration $k$ at a rate like $\mathcal{O}(1/\Delta_k^4)$, where $\Delta_k$ is a trust-region radius. 
Because the theoretical convergence of an MBTR method requires $\Delta_k\to 0$, this sampling rate can quickly become  infeasible in practice.  
As a result, the stochastic optimization methods suggested by these analyses focus on long-term convergence, often at the expense of efficiency in the number of function evaluations spent while satisfying a fixed budget.

\paragraphsummary{What are we doing here? and How}
Therefore, we seek to provide the best possible theoretical convergence results for a \emph{noise-aware} MBTR method. 
We aim to achieve this by building on recent results in what we would call noise-aware optimization. 
Of particular note are \cite{sun2022trust} and \cite{Cao2023}, which provide convergence analyses for trust-region frameworks given explicit access to an estimate of a noise level.
The  definitions of ``noise level'' differ between these two reference works; we believe that the framework in \cite{Cao2023} provides a bit more flexibility that allows the results to be especially appropriate in the VQA setting. 
The results of both \cite{sun2022trust} and \cite{Cao2023} can be characterized as providing a neighborhood of convergence for an optimization method given a reasonable estimate of the noise level.
Thus, instead of trying to achieve arbitrary accuracy (which, for instance, would require an insurmountable number of samples in a stochastic setting), the analysis provides a worst-case rate of convergence to a solution of ``best possible'' accuracy, given the noise level. 
The primary achievement of this manuscript is to take the noise-aware trust-region framework of \cite{Cao2023} and, through careful consideration of interpolation model construction, yield a \emph{practical implementation of a noise-aware MBTR method}. 
The convergence of our method will follow immediately by algorithmically ensuring that the interpolation models satisfy certain properties that are assumed in \cite{Cao2023}.  

\replace{}{\subsection{Paper Organization}}
\paragraphsummary{Paper outline and notation}
The outline of the manuscript is as follows.
\Cref{sec:setting} develops the problem setting, and
\Cref{sec:noisemodel} discusses the chosen noise model suitable for VQAs.
\Cref{sec:algorithm} presents a general noise-aware trust-region algorithm from the literature that we will build upon,  and
\Cref{sec:Preliminary.theory} presents some of its theoretical properties.
\Cref{sec:novelalg} presents our novel method for noise-aware optimization, and 
\Cref{sec:theory} analyzes the convergence properties of its sequence of iterates.
\Cref{sec:results} presents numerical results for our novel algorithm on a selection of noisy optimization problems.
\Cref{sec:discussion} concludes with open questions for further investigation.

\section{Setting}\label{sec:setting}
Given a computational ansatz, let a quantum circuit be parameterized by $\theta\in\Reals^d$. 
We aim to solve the optimization problem
\begin{equation}
\label{eq:basic_prob}
    \text{minimize} f(\theta): \theta\in\Reals^d,
\end{equation}
where $f: \Reals^d\to\Reals$ denotes some objective function (e.g., a guiding function in QAOA or the energy of the Hamiltonian in VQE). %

We first discuss, in \Cref{sec:noisemodel}, a particular choice of noise model from recent literature that appears to be appropriate for the noisy VQA setting. 
A suitable trust-region algorithm for the minimization of objective functions subscribing to this noise model was also recently proposed, and we present it in \Cref{sec:algorithm}.  
In \Cref{sec:Preliminary.theory} we summarize results concerning the  asymptotic performance of this trust-region algorithm. 

\subsection{A Noise Model}\label{sec:noisemodel}
Motivated by the role of the classical optimizer in the VQA setting, we consider a noise setting similar to that investigated in \citet{Cao2023}. 
In particular, we suppose that our access to an underlying objective function $f:\Reals^d\to\Reals$ is through what \citet{Cao2023} refers to as a ``stochastic zeroth-order oracle'' (of one of two types):

\begin{defi}
    \label{def:zoo}
    Let $\xi$ denote a random variable that may or may not depend on the optimization variables $\theta$. 
    We say $\tilde{f}(\theta, \xi)$ is a 
    \emph{zeroth-order oracle for $f$} 
    provided for all $x\in\Reals^d$, 
    the error quantity
    \begin{equation}
        \label{eq.error_quantity}
        e(\theta,\xi) = \tilde{f}(\theta,\xi) - f(\theta)
    \end{equation}
    satisfies at least one of two conditions: 
    \begin{description}
      \item[\normalfont{\textbf{Type 1. (Deterministically bounded noise)}}]\hlabel{type1}
        There exists a constant $\epsilon_f\geq 0$ such that $|e(\theta,\xi)|\leq\epsilon_f$ for all realizations of $\xi$. 

      \item[\normalfont{\textbf{Type 2. (Independent subexponential noise)}}]\hlabel{type2}
        There exist constants $\epsilon_f\geq 0$ and $c > 0$ such that 
        \begin{equation}
        \label{eq.subexp}
        \Prob_{\xi}\left[|e(\theta,\xi)|>t\right] \leq \exp(c(\epsilon_f - t)) \quad \forall t\geq 0.
        \end{equation}
    \end{description}
\end{defi}

We remark that the random variable $\xi$ in \Cref{def:zoo} should be considered exogenous in the sense that an optimization algorithm accessing $\tilde{f}$ cannot provide an input $\xi$. Hence, for ease of notation, we will denote the noisy evaluations of $f$ as $\tilde{f}(\theta)$ throughout this manuscript. 
We additionally remark that while, for generality, \Cref{def:zoo} is stated in terms of random variables, the definition does not preclude noisy deterministic functions;
in that case, one can trivially assign some Dirac distribution to $\xi$, and the definition is still meaningful. 
To provide context, one can interpret \cref{eq.subexp}  as saying that the errors exhibit a subexponential tail (with rate determined by $c$);
moreover, \cref{eq.subexp} suggests that there are no restrictions on the distribution of errors of magnitude within $\epsilon_f$. 
Many commonly used and naturally occurring probability distributions fall within this definition.

We believe  the assumption that a quantum computer's output exhibits the properties of a zeroth-order oracle is reasonable.
For example, in QAOA, a single shot of the output of the quantum circuit corresponds to one of combinatorially many binary vectors, which are then evaluated through the objective function of a combinatorial optimization problem on the classical device.  
The objective function used by the classical optimizer, in turn, is essentially an average of Bernoulli variables (one variable for each binary solution) weighted by the original combinatorial problem's objective  values. 
If one supposes that the combinatorial problem is well defined, and because Bernoulli variables have bounded (finite) support, the error $\epsilon_f$ is always trivially deterministically bounded in QAOA and is a \hyperref[type1]{Type 1} error as we have defined. 

Of course, such a trivial deterministic bound is loose to the point of uselessness.
However, using Bernstein inequality, one can show that the distribution of errors of the finitely supported $\tilde{f}$ will yield $c$ in \cref{eq.subexp} that scales linearly in the shot count (that is, the subexponential rate of decay is proportionally faster). 
\replace{Under this interpretation}{Given this observation}, \replace{}{we may replace}
$\epsilon_f$ \replace{}{with} the standard deviation $\sigma$ of the distribution 
described by $\tilde{f}(\theta, \xi)$, \replace{}{where $\xi$ describes the randomness associated with a given number of shots.} 

We are additionally attracted to the noise model defined in \Cref{def:zoo} in the VQA setting because of its nonparametric flexibility.
The definition itself does not specify what the noise distribution ought to be but essentially assumes\replace{ (via Bernstein inequality)}{, via conditions on the tail of the noise distribution,} only that the variance of $\tilde{f}(\theta)$ is defined and that there exist bounds on higher moments of $\tilde{f}(\theta)$. 

We also find this noise model attractive for VQA because of intermittent hardware noise, which remains a salient difficulty in near-term quantum devices. 
Owing to the nonparametric assumption, even if a quantum device ``drifts'' over time \cite{Proctor2020},
the noise model is sufficiently flexible to describe a sum of stochastic error (as previously discussed) and hardware error. 
While the ``variance'' (and higher moments) of hardware noise is not easily quantifiable, allowing a sufficiently large $\epsilon_f$ in the deterministically bounded regime of \Cref{def:zoo} can potentially encapsulate hardware noise.
This interpretation lends credence to the use and purported convergence guarantees of algorithms based on zeroth-order oracles, like the one we will describe now.  

\subsection{A Noise-Aware Trust-Region Algorithmic Framework}\label{sec:algorithm}

We begin by restating the general first-order trust-region algorithm of \citet{Cao2023}[Algorithm 1] in \Cref{alg.TR.noise}; our statement is identical up to changes in notation. 
For a complete statement of \Cref{alg.TR.noise}, we require one additional definition, which must be carefully handled in derivative-free optimization.
\begin{defi}
    \label{def:foo}
    Let $\xi$ denote a random variable.
    We say $g(\theta,\xi)$ is a \emph{first-order oracle} for $\nabla f(\theta)$ provided there exist $\kappa_{eg}\geq 0$ and $\epsilon_g\geq 0$ such that for any given $\theta\in\Reals^d$, for any given $\Delta>0$, and for any given probability $p_1\in[0.5,1]$,
    \begin{equation}
    \label{eq.gradient_bound}
    \Prob_{\xi}\left[\|g(\theta,\xi) - \nabla f(\theta)\|\leq\epsilon_g + \kappa_{eg}\Delta \right] \geq p_1 .
    \end{equation}
    The random variable $\xi$ may or may not depend on $\theta$ and $\Delta$. 
\end{defi}

Given access to a zeroth-order oracle (see \Cref{def:zoo}) and a first-order oracle (see \Cref{def:foo}), convergence results for \Cref{alg.TR.noise} may be proven.

\begin{algorithm2e}[H]
  \SetAlgoNlRelativeSize{-4}
    \KwIn{starting point $\theta_0$; initial trust-region radius $\Delta_0>0$; trust-region parameters
    $\eta_1, \eta_2, \gamma \in(0,1)$,
    tolerance parameter $r\geq 2\epsilon_f$, probability parameter $p_1$, bound on model Hessians $\kappa_{\text{bmh}}\geq 0$, constant $\kappa_{\mathrm{fcd}} \in (0,1]$ }
    Evaluate $\tilde{f}(\theta)$ if not previously evaluated.
    
    \For{$k=0,1,2, \ldots$}{    
       Compute model gradient $g_k$ via first-order oracle with input $\theta_k,p_1,$ and $\Delta_k$
       
       Let $H_k$ be a model Hessian satisfying $\|H_k\|\leq\kappa_{\text{bmh}}$
       
        Construct a local quadratic model 
        \begin{equation}
      \label{eq.model.def.quadratic}
    m_k(s) = \tilde{f}(\theta_k) + g_k^T s + s^T  H_k s
\end{equation}  

Compute $s_k$ as an approximate minimizer of $\{\min m_k(s): s\in\cB(0,\Delta_k) \}$, such that $s_k$ satisfies
\begin{equation}
  \label{eq.model.decrease.sufficient}
  m_k\left(\theta_k\right)-m_k\left(\theta_k+s_k\right) \geq \frac{\kappa_{\mathrm{fcd}}}{2}\left\|g_k\right\| \min \left\{\frac{\left\|g_k\right\|}{\left\|H_k\right\|}, \Delta_k\right\}
\end{equation}

Evaluate $\tilde{f}(\theta_k + s_k)$ from zeroth-order oracle as in  \Cref{def:zoo}.

Compute 
\begin{equation}
\label{eq.rho.k}
    \rho_k=\frac{\tilde f(\theta_k)-\tilde f(\theta_k+ s_k)+r}{m_k\left(0\right)-m_k\left(s_k\right)}
\end{equation}
        \eIf{$\rho_k \geq \eta_1$}{
            \begin{flalign*}
            \mbox{Set $\theta_{k+1}=\theta_k+s_k$ and }
                \Delta_{k+1}= 
                \begin{cases}\gamma^{-1} \Delta_k & \text { if }\left\|g_k\right\| \geq \eta_2 \Delta_k \\ 
                \gamma \Delta_k, & \text { if }\left\|g_k\right\|<\eta_2 \Delta_k
                \end{cases}&&
            \end{flalign*}
            }{
            Set $\theta_{k+1}=\theta_k$ and $\Delta_{k+1}=\gamma \Delta_k$.
        }
    }
\caption{General framework for noise-aware optimization\label{alg.TR.noise}}
\end{algorithm2e}

\subsection{Preliminary Assumptions and Analysis}\label{sec:Preliminary.theory}

\paragraphsummary{Present main theorem in \citet{Cao2023}}

Under reasonable assumptions, \citet{Cao2023}[Theorems 4.11 and 4.18] demonstrate that in both the deterministically bounded noise regime and the subexponential noise regime, the probability of exceeding $\mathcal{O}(1/\epsilon^2)$ iterations of \Cref{alg.TR.noise} to find an $\epsilon$-stationary solution to \cref{eq:basic_prob}  decays exponentially in the exceedance; $\epsilon$ is a function of $\epsilon_f$ and $\epsilon_g$.

We start by making the following assumptions. 
\begin{assumption}
     The objective function $f$ is continuously differentiable. That is, the gradient $\nabla f$ is $L_{\nabla f}$-Lipschitz continuous on $\mathcal{R}^d$ and satisfies
\begin{align*}
    \|\nabla f (\theta^{(1)})-\nabla f(\theta^{(2)})\|\le L_{\nabla f}\|\theta^{(1)}-\theta^{(2)}\| 
\end{align*}
for all $(\theta^{(1)}, \theta^{(2)}) \in \mathcal{R}^d \times \mathcal{R}^d$. 
\label{ass.L.smooth}
\end{assumption}

\begin{assumption} 
\label{ass.f.lower.bounded}
The function $f$ is lower bounded by $f_{\inf}$.
\end{assumption}

\begin{assumption} 
\label{ass.H.upper.bounded}
For all $k = 0,1,\cdots$, $\|H_k\| \le \kappa_{\mathrm{bhm}}$, where $\kappa_{\mathrm{bhm}} >0$. 
\end{assumption}

We state intentionally simplified versions of two main results of \citet{Cao2023}, which provide the number of iterations $T$ needed to ensure (with high probability) that within the first $T$ iterations, the event  $\left \{ \left\|\nabla f\left(\theta_k\right)\right\| \leq \epsilon \right\}$ occurs.

\begin{theorem}
\label{theorem.complexity.cao.1}
Suppose \Crefrange{ass.L.smooth}{ass.H.upper.bounded} are satisfied. Suppose we have access to a 
zeroth-order oracle of \hyperref[type1]{Type 1} with parameter $\epsilon_f$ and a
first-order oracle with parameters $\kappa_{eg}, \epsilon_g,$ and $p_1$. 
Let $\{\Theta_k\}$ denote the sequence of random variables with realizations $\{\theta_k\}$ generated by \Cref{alg.TR.noise}.
There exists $\kappa$, independent of $\epsilon_f, \epsilon_g, p_1$ but dependent on $\kappa_{eg}$, such that
given any
$$\epsilon > \kappa\left(\sqrt{\frac{\epsilon_f}{2p_1 - 1}}
+ \epsilon_g \right),$$
it holds that
$$\Prob\left[\displaystyle\min_{0\leq k \leq T-1} \|\nabla f(\Theta_k)\| \leq \epsilon\right] 
\geq
1 - \exp\left(-\mathcal{O}(T)\right)$$
for any $T\in \mathcal{O}\left(\frac{1}{\epsilon^2}\right)$.

\end{theorem}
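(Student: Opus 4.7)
The plan is to follow the two-pronged structure typical of trust-region complexity proofs under stochasticity (in the spirit of the STORM framework and \citet{Cao2023}): first establish deterministic progress under ``good'' oracle events, then aggregate across iterations via a martingale-style concentration argument. Since the statement is a streamlined restatement of \citet{Cao2023}[Theorem 4.11], I would aim to verify that \Cref{ass.L.smooth,ass.f.lower.bounded,ass.H.upper.bounded}, together with \Cref{def:zoo}(\hyperref[type1]{Type 1}) and \Cref{def:foo}, yield the hypotheses of that reference's theorem, and then invoke it, but for intuition I would sketch the argument directly.

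Step 1: Introduce the stopping time $T_\epsilon = \min\{k : \|\nabla f(\Theta_k)\| \leq \epsilon\}$ and define a potential function of the form $\Phi_k = \nu (f(\Theta_k) - f_{\inf}) + \Delta_k^2$, with $\nu>0$ to be chosen later so that both contributions are comparable. The goal is to show $\mathbb{E}[\Phi_{k+1} - \Phi_k \mid \mathcal{F}_k] \leq -c$ for some constant $c>0$ on the event $\{k < T_\epsilon\}$. Step 2: Perform a per-iteration case analysis, conditioning on the event $A_k = \{\|g_k - \nabla f(\Theta_k)\| \leq \epsilon_g + \kappa_{eg}\Delta_k\}$, which has probability at least $p_1$ by \Cref{def:foo}. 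On $A_k$, using the Cauchy-type model decrease guaranteed by \eqref{eq.model.decrease.sufficient}, the $L_{\nabla f}$-smoothness of $f$, and the noise tolerance $r \geq 2\epsilon_f$ in the ratio test \eqref{eq.rho.k}, I would show that either (i) $\Delta_k$ is relatively small compared with $\|\nabla f(\Theta_k)\|$, forcing a successful step with $f$-decrease of order $\Omega(\|\nabla f(\Theta_k)\| \Delta_k)$ and $\Delta_{k+1} = \gamma^{-1}\Delta_k$, or (ii) the step is unsuccessful but $\Delta_k$ contracts, which drives the $\Delta_k^2$ term down. On the complementary event $A_k^c$ (probability at most $1 - p_1$), progress cannot be guaranteed, so one-step changes in $\Phi_k$ must instead be crudely bounded by $L_{\nabla f}$-smoothness and $\|H_k\|\leq \kappa_{\mathrm{bhm}}$.

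Step 3: Combine the two cases. Multiplying the good-event decrease by $p_1$ and the bad-event worst case by $(1-p_1)$ yields a net expected change that becomes strictly negative only when $\|\nabla f(\Theta_k)\|$ is above a threshold scaling like $\sqrt{\epsilon_f/(2p_1-1)} + \epsilon_g$. The $\sqrt{\epsilon_f}$ term arises from equating a quadratic-in-$\Delta$ model decrease against the constant noise floor $\epsilon_f$, while the factor $(2p_1-1)^{-1/2}$ emerges from balancing the expected upward drift on $A_k^c$ against the expected downward drift on $A_k$. With $c>0$ secured, applying an Azuma--Hoeffding bound to the submartingale $\Phi_k + ck$ on the stopped sequence, together with $\Phi_k \geq 0$ (via \Cref{ass.f.lower.bounded}), yields $\Prob[T_\epsilon > T] \leq \exp(-\Omega(T))$ for any $T \in \mathcal{O}(1/\epsilon^2)$.

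The main obstacle is the treatment of $A_k^c$ in Step 2: the algorithm may accept large spurious steps when the gradient estimate is poor, and one must bound the resulting damage tightly enough that averaging against $(1-p_1)$ leaves a net negative drift. This is exactly what forces $p_1 > 1/2$ and produces the $(2p_1-1)^{-1}$ scaling in the threshold. A secondary subtlety is that the ratio test \eqref{eq.rho.k} uses the user-supplied tolerance $r$ rather than the unknown true noise bound $\epsilon_f$; the requirement $r \geq 2\epsilon_f$ is what ensures that a genuinely predictive model is not rejected because of noise, and threading this through the ``sufficiently small $\Delta_k$'' branch of the case analysis is the main technical content drawn from \citet{Cao2023}.
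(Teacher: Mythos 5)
The paper offers no proof of this theorem at all---it is explicitly presented as an ``intentionally simplified'' restatement of \citet{Cao2023}[Theorem 4.11], so the paper's own justification is simply the citation, which is exactly the route you take. Your additional sketch (potential function $\nu(f(\Theta_k)-f_{\inf})+\Delta_k^2$, case split on the first-order-oracle event of probability $p_1$, the $(2p_1-1)$ factor from balancing drift on good versus bad iterations, and a concentration bound on the stopped process) faithfully reflects the structure of the cited argument, so your proposal is correct and consistent with the paper.
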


\begin{theorem}
\label{theorem.complexity.cao.2.simple}
Suppose that  \Crefrange{ass.L.smooth}{ass.H.upper.bounded} are satisfied. Suppose we have access to a 
zeroth-order oracle of \hyperref[type2]{Type 2} with parameters $\epsilon_f$ and $c$ and a first-order oracle with parameters $\kappa_{eg}, \epsilon_g,$ and $p_1$. Define $p_0 = 1- 2 \exp(c (\epsilon_f -r/2 ))$. 
Let $\{\Theta_k\}$ denote the sequence of random variables with realizations $\{\theta_k\}$ generated by \Cref{alg.TR.noise}. There exists $\kappa$, independent of $\epsilon_f, \epsilon_g, p_1$ but dependent on $\kappa_{eg}$, such that
given any
$$\epsilon > \kappa\left(\sqrt{\frac{\epsilon_f + 1/c}{2 p_0+2 p_1-3}}
+ \epsilon_g \right),$$
it holds that
$$\Prob\left[\displaystyle\min_{0\leq k \leq T-1} \|\nabla f(\Theta_k)\| \leq \epsilon\right] 
\geq
1 - 2\exp\left(-\mathcal{O}(T)\right)- \exp\left(-ct/4\right) $$
for any $T\in \mathcal{O}\left(\frac{1}{\epsilon^2}\right)$ and any $t > 0$.
\end{theorem}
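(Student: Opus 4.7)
The plan is to follow the approach in \citet{Cao2023}, Theorem 4.18, by leveraging the subexponential tail bound to reduce the analysis to one that closely mirrors the Type~1 (deterministically bounded noise) case already captured in \Cref{theorem.complexity.cao.1}. The central conceptual idea is that, although the noise in Type~2 is not bounded pointwise, with high probability at any fixed iteration it behaves as if it were bounded by $r/2$; thus the same descent-lemma machinery applies, but only conditioned on a favorable event, and the failure probability of that event contributes an additional stochastic-process term to the final bound.

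First, I would use the subexponential tail \eqref{eq.subexp} to show that at each evaluation point, the probability that the noise is bounded by $r/2$ is at least $1 - \exp(c(\epsilon_f - r/2))$. A union bound over the two evaluations $\tilde{f}(\theta_k)$ and $\tilde{f}(\theta_k+s_k)$ that appear in the numerator of $\rho_k$ in \eqref{eq.rho.k} then yields the stated probability $p_0 = 1 - 2\exp(c(\epsilon_f - r/2))$ that the combined error in the numerator is bounded by $r$. On this favorable ``zeroth-order-accurate'' event, the noise-aware additive correction $r$ in \eqref{eq.rho.k} exactly absorbs the combined noise, so the inequality that drives the Type~1 descent analysis can be reapplied essentially verbatim.

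Next, I would introduce at each iteration $k$ the joint event that (i) the first-order oracle is accurate, which occurs with probability at least $p_1$ by \Cref{def:foo}, and (ii) the zeroth-order oracle is accurate at both evaluation points, which occurs with probability at least $p_0$ by the previous step. A union bound gives a joint ``good iteration'' probability of at least $p_0 + p_1 - 1$. The STORM-style descent argument for \Cref{alg.TR.noise} then converts the (filtration-adapted) indicator of good iterations into a submartingale-like increment, and tracking a Lyapunov potential that combines $f$ with a power of $\Delta_k$ (bounded below using \Cref{ass.f.lower.bounded}) yields a drift proportional to $2(p_0 + p_1 - 1) - 1 = 2p_0 + 2p_1 - 3$, which is precisely the factor appearing in the denominator of the stated $\epsilon$ lower bound. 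Applying a submartingale concentration inequality over $T$ iterations then produces the $2\exp(-\mathcal{O}(T))$ failure term, mirroring the Type~1 analysis, and gives the $\mathcal{O}(1/\epsilon^2)$ iteration complexity.

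Finally, I would account for the fact that the Type~2 noise contributes to the effective iteration budget through the cumulative error in $\tilde{f}$ values: for any free parameter $t > 0$, applying a Bernstein-type concentration inequality to the sum of the subexponential errors across the $T$ iterations yields an additional tail term of the form $\exp(-ct/4)$, which is the source of the last term in the probability lower bound. The main obstacle, as I see it, is carefully handling the filtration and the adaptivity of the zeroth-order event (which depends on $\theta_k$ and $s_k$, both measurable with respect to the algorithmic history) to ensure that one obtains an honest submartingale decomposition without double-counting failure modes, and verifying that the constant $\kappa$ inherits the correct dependence on $\kappa_{eg}$ and on the trust-region parameters from the Type~1 analysis rather than requiring any new quantitative argument specific to the subexponential tail.
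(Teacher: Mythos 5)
Your proposal is essentially correct and matches the intended argument: the paper does not prove this theorem itself but states it as a simplified restatement of \citet{Cao2023}[Theorem 4.18], and your sketch---the union bound over the two zeroth-order evaluations giving $p_0$, the joint ``good iteration'' event with probability $p_0+p_1-1$ feeding a martingale drift of $2p_0+2p_1-3$, and the Bernstein-type concentration on the accumulated subexponential errors producing the $\exp(-ct/4)$ term---is a faithful reconstruction of that reference's proof strategy. No gaps worth flagging beyond the filtration bookkeeping you already identify.
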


\begin{remark}
While we encourage the reader to study \citet{Cao2023}[Theorems 4.11 \& 4.18] for a complete description of the constants that we have deliberately hidden in $\mathcal{O}(T)$ and $\mathcal{O}(1/\epsilon^2)$, 
we remark that, as one might intuit, the $\mathcal{O}(T)$ rate in the exponentially decaying probability decreases in $\epsilon_f$, 
and the $\mathcal{O}\left(\frac{1}{\epsilon^2}\right)$ term increases with both the initial optimality gap $f(\theta_0) - f_{\inf}$ and the Lipschitz constant $L_{\nabla f}$. When $\epsilon_f = \epsilon_g = 0$, \Cref{theorem.complexity.cao.1} reduces to something resembling the first-order convergence result for a standard derivative-free trust-region algorithm; see, for example, \citet{conn2009introduction}[Chapter 10]. \replace{}{For theoretical soundness, we set $r$ to be any value larger than $2 \epsilon_f$. When $r < 2 \epsilon_f$, $\rho_k$ could be dominated by the noise and the algorithm may fail to make progress. Since the lower bounds on $\epsilon$ in \Cref{theorem.complexity.cao.1} and \Cref{theorem.complexity.cao.2.simple} depend on $r$, $r =  2 \epsilon_f$ is the theoretically optimal choice of $r$, provided $\epsilon_f$ was exactly knowable.}
\end{remark}

\section{A Practical Noise-Aware Derivative-Free Algorithm Based on \Cref{alg.TR.noise}}\label{sec:novelalg}

While \Cref{alg.TR.noise}
provides an invaluable framework for analyzing DFO methods, 
it does not lend itself immediately to a practical algorithm for DFO. 
In particular, \Cref{alg.TR.noise} does not detail how to construct a first-order oracle for computing $g_k$ or how to choose model Hessians $H_k$. 
\citet{Cao2023} suggest using linear interpolation of noisy function values for the first-order oracle, similar to the approach analyzed by \citet{berahas2022theoretical}. 
Our approach aligns with this line of reasoning but is further motivated by two practical considerations.
First, we aim to go beyond linear interpolation models to also minimum Frobenius norm quadratic interpolation models.
(This extension is certainly not new and is the subject of \citet{conn2009introduction}[Chapter 5].) \replace{}{Compared to linear interpolation models, minimum Frobenius norm models are more complex and generally model the true objective with greater accuracy when the objective is nonlinear; this typically leads to more per-iteration objective decrease. Additionally, minimum Frobenius norm models do not require as many samples as quadratic interpolation models to construct. 
}
Second, and more important, because we often employ model-based methods for optimizing computationally expensive objectives in derivative-free optimization, we do not want to generate a new set of interpolation points on every iteration. Rather, we prefer to judiciously reuse previously evaluated points and their corresponding (noisy) function evaluations. 
Especially in the noisy setting, this merits a careful reexamination of known results concerning the geometry of interpolation sets and algorithmic methods for ensuring geometric constraints are satisfied in practice. In addition, we impose a lower bound on a sampling radius when we select points for model construction, in order to ensure that the error between the true gradient and the model gradient is controlled.  
\Cref{sec:Geometry} focuses on precisely establishing these algorithmic controls. 

\subsection{Geometry of Interpolation Sets}
\label{sec:Geometry}
At every iteration $k \in \{0,1,\dots\}$ of our derivative-free trust-region algorithm, there exists an incumbent iterate $\theta_k$ (that is also a trust-region center).
In each iteration, a set of distinct points $\cX = \{x^0 = \theta_k, x^1, \cdots, x^p\}\subset\Reals^d$ is selected; the algorithm ensures that the noisy function $\tilde{f}$ is evaluated at each $x\in\cX$. 
A model is constructed such that it interpolates the noisy function $\tilde{f}$ at each $x\in\mathcal{X}$, and this model is intended to approximate the objective function near $\theta_k$.

In this work we will focus on minimum Frobenius norm quadratic interpolation models, which are appropriate in the setting of budget-constrained (expensive) derivative-free optimization.  
We consider the space of all quadratic polynomials in $\Reals^d$;
a suitable basis for this space is given by the union of the degree-zero and degree-one monomials $\{1, x_1, x_2, \dots, x_d\}$ and the degree-two monomials $\{x_1^2,\dots,x_d^2,x_1x_2,x_1x_3,\dots,x_1x_n,\dots, x_{n-1}x_n\}$. 
Vectorizing these two sets as $\mu(x)\in\Reals^{d+1}$ and $\nu(x)\in\Reals^{d(d+1)/2}$ respectively, 
we see that any degree-two polynomial, and hence any quadratic interpolation model, can be expressed as
 \begin{align}
 \label{eq.mfn_model}
m(x)=\alpha^T \mu(x)+\beta^T \nu(x),
\end{align} 
for $\alpha\in\Reals^{d+1}$ and $\beta\in\Reals^{d(d+1)/2}$. 

We assume throughout that the interpolation set $\cX$ has cardinality $p \in [d+1,(d+1)(d+2)/2]$. 
Define 
\begin{equation}
    \tilde f(\mathcal{X})=\left[\begin{array}{r}
\tilde f\left(x^0\right),
\tilde f\left(x^1\right),
\dots,
\tilde f\left(x^p\right)
\end{array}\right]^\top,
\end{equation}
where $\tilde f\left(x^i\right)$ represents noisy evaluations of $ f\left(x^i\right)$ for $i = 0, 1,\dots, p$. 
Enforcing $m(x^i) = \tilde  f(x^i)$ for $i=0,1,\dots,p$ is tantamount to solving
 \begin{align}
 \label{eq.model.interpolation}
\left[\begin{array}{l}
M_{\mathcal{X}} \\
N_{\mathcal{X}}
\end{array}\right]^T\left[\begin{array}{l}
\alpha \\
\beta
\end{array}\right]=\tilde  f(\mathcal{X}),
\end{align} 
where we define $M_{\mathcal{X}} \in \mathcal{R}^{(d+1) \times|\mathcal{X}|}$ and $N_{\mathcal{X}} \in \mathcal{R}^{d(d+1) / 2 \times|\mathcal{X}|}$, by $M_{i, j}=\mu_i\left(x^j\right)$ and $N_{i, j}=\nu_i\left(x^j\right)$, respectively. 
Note that \cref{eq.model.interpolation} may not admit a unique solution since $| \mathcal{X}| \le (d+1)(d+2)/2$. 
We will focus on solutions to the interpolation problem \cref{eq.model.interpolation} that are of minimum norm with respect to the vector $\beta$. 
That is,
we seek $(\alpha, \beta)$ that solve
 \begin{align}
 \label{eq.MNH}
\min \left\{\frac{1}{2}\|\beta\|^2: M_{\mathcal{X}}^T \alpha+N_{\mathcal{X}}^T \beta=\tilde f(\mathcal{X}) \right\}.
\end{align} 

The solution to \cref{eq.MNH} is a quadratic polynomial whose Hessian matrix is of minimum Frobenius norm, since $\|\beta\|=\left\|\nabla_{xx}^2 m(x)\right\|_F$. 
The KKT conditions for \cref{eq.MNH} can be written as
 \begin{align}
 \label{eq.KKT}
\left[\begin{array}{cc}
N_{\mathcal{X}}^T N_{\mathcal{X}} & M_{\mathcal{X}}^T \\
M_{\mathcal{X}} & 0
\end{array}\right]\left[\begin{array}{l}
\lambda \\
\alpha
\end{array}\right]=\left[\begin{array}{c}
\tilde f(\mathcal{X})  \\
0
\end{array}\right]
\end{align} 
with $\beta = N_{\mathcal{X}} \lambda$. 
We make the observation that to guarantee \cref{eq.KKT} will yield a unique solution, we must have that both $rank(M_{\mathcal{X}}) = d + 1$
and 
$N_{\mathcal{X}}^T N_{\mathcal{X}}$ is positive definite for all $u \in \mathcal{R}^d$ such that $M_{\mathcal{X}} u = 0$. 

Although relatively expensive compared with, for instance, the heuristic employed in \texttt{POUNDers} (see \citet{wild2008mnh} for details), we find that careful geometry maintenance is critical to the performance of a derivative-free model-based optimization method in the noisy setting.
Recalling \Cref{def:zoo}, we aim to ensure that, given $\Delta>0$, the model gradient $g$ defined by $\alpha$ in \cref{eq.mfn_model} satisfies \cref{eq.gradient_bound}
for some values of $\epsilon_g, \kappa_{eg},$ and $p_1$. 
Demonstrating such a result requires careful consideration of the geometry of $\cX$. We begin by introducing the concept of Lagrange polynomials associated with $\cX$. 

\begin{defi}
Given a set $\cX = \{x^0, x^1, \dots, x^p\} \subset\Reals^d$ of interpolation points, we define the \emph{minimum Frobenius norm Lagrange polynomials associated with $\cX$} as the set of polynomials 
$\{\ell_i(x) = \alpha_i^\top \mu(x) + \beta_i^\top \nu(x): i=0,1,\dots,p\}$
with coefficients $(\alpha_i,\beta_i)$ chosen such that, for each $i$,
$(\alpha_i,\beta_i)$ is the solution to 
\cref{eq.MNH} when the function $f$ in the right-hand side in the constraint is replaced with the indicator function for $x^i$. 
\end{defi}

It is straightforward to demonstrate that the minimum Frobenius norm model $m(x)$ in \cref{eq.mfn_model} is expressible via a linear combination of minimum Frobenius norm Lagrange polynomials, that is,
\begin{equation}
    \label{eq.mfn_poly}
    m(x) = \displaystyle\sum_{i=0}^p f(x^i)\ell_i(x).
\end{equation}
Moreover,
denoting the matrix in the left-hand side of \cref{eq.KKT} by $W_{\cX}$, it is clear from the definition of minimum Frobenius norm Lagrange polynomials that each $(\alpha_i, \beta_i)$ defining the polynomials may be computed directly from the columns of $W_{\cX}^{-1}$. 
In particular, the last $d+1$ entries of the $i$th column of $W_{\cX}^{-1}$ define $\alpha_i$.
Also, by considering the constraints in the quadratic dual program to \cref{eq.MNH}, 
one may derive that the quadratic term of $\ell_i(x)$ described by $\beta_i$ is a weighted sum of rank-one matrices, 
$$\nabla^2\ell_j(x)
=
\displaystyle\sum_{j=0}^p \left[W_{\cX}^{-1}\right]_{j + 1, i} (x^i - x^0)(x^i - x^0)^\top. 
$$

With computable expressions for the minimum Frobenius norm Lagrange polynomials associated with a set $\cX$ in hand, the following definition of $\Lambda$-poisedness is easily stated.
\begin{defi}
We say that $\cX$ is \emph{poised in the minimum Frobenius norm sense} provided $W_{\cX}$ is nonsingular. 
\end{defi}

\begin{defi}
\label{def.lambdapoised}
Let $\Lambda>0$, and let a set $B\subset\Reals^d$ be given. 
A set $\cX = \{x^0,x^1,\dots,x^p\}$, poised in the minimum Frobenius norm sense, 
is moreover \emph{$\Lambda$-poised in $B$ in the minimum Frobenius norm sense} provided

$$\Lambda \geq \displaystyle\max_{0,1,\dots,p} \displaystyle\max_{x\in B} |\ell_i(x)|.$$
\end{defi}

Critical to our development of error bounds, 
we state the following results.

\begin{theorem}
\label{thm.mnh_error}
Let $\Lambda>0$, and let 
$\cX=\{x^0,x^1,\dots,x^p\}$ be a given set of interpolation points. 
Let
$\cB(x^0,\Delta) \subset\Reals^d$ be a ball sufficiently large such that $\cX\subset \cB(x^0,\Delta)$. 
Suppose \Cref{ass.L.smooth} holds with constant $L_{\nabla f}$. 
Define the matrix
$$\hat{L} = \frac{1}{\Delta}
\left[
\begin{array}{cccc}
x^1 - x^0 & x^2 - x^0 & \cdots & x^p - x^0 ,\\
\end{array}
\right]^\top
$$
and denote its pseudoinverse via $\hat{L}^\dagger = (\hat{L}^\top \hat{L})^{-1}\hat{L}^\top$. 
Define 
$$\epsilon_{\max} = \displaystyle\max_{i=1,2,\dots,p} |\tilde{f}(x^i) - f(x^i)| 
\quad \text{ and } \quad
\epsilon_0 = |\tilde{f}(x^0) - f(x^0)|. 
$$

If $\cX$ is $\Lambda$-poised in $\cB(x^0,\Delta)$ in the minimum Frobenius norm sense, 
then the minimum Frobenius norm model $m(x)$ satisfies
$$\|\nabla f(x) - \nabla m(x)\| \leq \sqrt{p + 1} \|\hat{L}^\dagger\|
\left[
(L_{\nabla f} + \|\nabla^2 m(x^0)\|)\Delta 
+ \frac{\epsilon_0 + \epsilon_{\max}}{\Delta}
\right]
\quad \forall x\in \cB(x^0,\Delta).$$
\end{theorem}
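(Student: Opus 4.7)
The plan is to first establish the gradient error bound at the base point $x^0$ by converting the interpolation conditions into a linear system whose solution is $\nabla f(x^0)-\nabla m(x^0)$, and then to extend the bound to arbitrary $x\in\cB(x^0,\Delta)$ via Lipschitz continuity. The role of the $\Lambda$-poisedness hypothesis is to guarantee that the minimum Frobenius norm interpolation system \cref{eq.KKT} admits a unique solution, which in turn forces $M_{\cX}$ to have full row rank $d+1$, and hence the matrix $\hat{L}^\top$ has rank $d$, so that its pseudoinverse $\hat{L}^\dagger$ is well-defined.

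First I would use the interpolation conditions $m(x^i)=\tilde f(x^i)$ for $i=0,\ldots,p$ to form the $p$ difference equations $m(x^i)-m(x^0)=\tilde f(x^i)-\tilde f(x^0)$. Since $m$ is quadratic, the left-hand side expands exactly as $\nabla m(x^0)^\top(x^i-x^0)+\tfrac{1}{2}(x^i-x^0)^\top\nabla^2 m(x^0)(x^i-x^0)$. Invoking \Cref{ass.L.smooth}, Taylor's theorem yields $f(x^i)-f(x^0)=\nabla f(x^0)^\top(x^i-x^0)+R_i$ with $|R_i|\le \tfrac{1}{2}L_{\nabla f}\Delta^2$. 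Subtracting the two expansions and regrouping noise terms $\tilde f(x^i)-f(x^i)$ and $\tilde f(x^0)-f(x^0)$, I obtain
\[
\bigl(\nabla f(x^0)-\nabla m(x^0)\bigr)^\top(x^i-x^0)
= \bigl[\tilde f(x^i)-f(x^i)\bigr]-\bigl[\tilde f(x^0)-f(x^0)\bigr]
- \tfrac{1}{2}(x^i-x^0)^\top\nabla^2 m(x^0)(x^i-x^0)+R_i,
\]
whose right-hand side is bounded in absolute value by $\tfrac{1}{2}(L_{\nabla f}+\|\nabla^2 m(x^0)\|)\Delta^2+\epsilon_0+\epsilon_{\max}$.

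Next I would stack these $p$ equations row-by-row and divide each by $\Delta$ to expose the matrix $\hat{L}$ on the left: this produces the compact system $\hat{L}\bigl(\nabla f(x^0)-\nabla m(x^0)\bigr)=v$, where each entry of $v\in\Reals^p$ is bounded by $\tfrac{1}{2}(L_{\nabla f}+\|\nabla^2 m(x^0)\|)\Delta+(\epsilon_0+\epsilon_{\max})/\Delta$. Applying $\hat{L}^\dagger$ (well-defined by the poisedness discussion above) and the Cauchy--Schwarz bound $\|v\|_2\le\sqrt{p}\,\|v\|_\infty$ yields a bound on $\|\nabla f(x^0)-\nabla m(x^0)\|$ of the desired form.

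Finally, to move from $x^0$ to general $x\in\cB(x^0,\Delta)$, I would apply the triangle inequality $\|\nabla f(x)-\nabla m(x)\|\le \|\nabla f(x)-\nabla f(x^0)\|+\|\nabla f(x^0)-\nabla m(x^0)\|+\|\nabla m(x^0)-\nabla m(x)\|$. The first term is bounded by $L_{\nabla f}\Delta$ from \Cref{ass.L.smooth}, and the third by $\|\nabla^2 m(x^0)\|\Delta$ since $m$ is quadratic. Absorbing these additive terms into the existing $(L_{\nabla f}+\|\nabla^2 m(x^0)\|)\Delta$ coefficient accounts for the upgrade from $\sqrt{p}$ to $\sqrt{p+1}$. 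The main bookkeeping obstacle will be tracking the various constants so that all pieces cleanly combine into the stated bound; conceptually the only subtlety is justifying the existence of $\hat{L}^\dagger$ from $\Lambda$-poisedness in the MFN sense, while the rest of the argument is a careful accounting exercise.
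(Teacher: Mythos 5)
Your core mechanism is the same as the paper's: use the interpolation conditions, difference away the zeroth-order/constant error, stack the resulting $p$ equations into a linear system whose coefficient matrix is $\hat{L}$, bound the right-hand side by Taylor remainders plus Hessian terms plus $\epsilon_0+\epsilon_{\max}$, and apply $\hat{L}^\dagger$. Where you diverge is the last step: the paper never specializes to $x^0$. It carries an \emph{arbitrary} $x\in\cB(x^0,\Delta)$ through the whole derivation (writing $m(x)=f(x)+e^f(x)$ and $\nabla m(x)=\nabla f(x)+e^g(x)$, then subtracting the $i=0$ interpolation equation from the others to cancel $e^f(x)$), so that the unknown in the linear system is already $e^g(x)$ and the Lipschitz and $\|H\|$ contributions land \emph{inside} the bracket, multiplied by $\sqrt{p+1}\,\|\hat{L}^\dagger\|$. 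You instead bound $e^g(x^0)$ and then extend by the triangle inequality, which produces an additive term $(L_{\nabla f}+\|\nabla^2 m(x^0)\|)\Delta$ that carries \emph{no} factor of $\|\hat{L}^\dagger\|$.

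That last step is where there is a genuine gap: your claim that this additive term can be ``absorbed'' into the stated bound by upgrading $\sqrt{p}$ to $\sqrt{p+1}$ does not follow. Absorbing it requires $\bigl(\sqrt{p+1}-\tfrac{1}{2}\sqrt{p}\bigr)\|\hat{L}^\dagger\|\geq 1$, i.e., a \emph{lower} bound on $\|\hat{L}^\dagger\|$ of order $1/\sqrt{p+1}$, and no such lower bound is available in general (the rows of $\hat{L}$ have norm at most $1$, which only yields $\|\hat{L}^\dagger\|\geq\sqrt{d/p}$, insufficient when $p$ is much larger than $d$). So as written your argument proves a bound of the form $\sqrt{p}\,\|\hat{L}^\dagger\|[\cdots]+(L_{\nabla f}+\|\nabla^2 m(x^0)\|)\Delta$, which is qualitatively of the right shape but does not establish the inequality in \Cref{thm.mnh_error} as stated. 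The repair is exactly the paper's device: keep the evaluation point $x$ arbitrary from the outset rather than bounding at $x^0$ and propagating. Your preliminary remarks on why $\Lambda$-poisedness guarantees that \cref{eq.KKT} is nonsingular, hence that $\hat{L}$ has full column rank and $\hat{L}^\dagger$ is well defined, are correct and are in fact more explicit than what the paper records.
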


\begin{proof}{\textbf{Proof.}} 
    For ease of notation, we denote the quadratic underdetermined interpolation model $m(x)$ as
    $$m(x) = c + g^\top x + \frac{1}{2}x^\top H x.$$
    In other words, we have simply reorganized the terms in \cref{eq.mfn_model}. 
    For an arbitrary point $x\in \cB(x^0,\Delta)$, denote the function value error ($e^f$) and the gradient value error ($e^g$) via
    $$m(x) = f(x) + e^f(x) \quad \text{ and }
    \quad \nabla m(x) = Hx + g = \nabla f(x) + e^g(x). $$
    Because $m(x)$ interpolates $\tilde{f}(x)$ at each point in $\cX$,
    we conclude from the expression for $e^f$ that, for each $i=0,1,\dots,p$, 
    $$
    \begin{array}{rll}
    m(x^i) & = \tilde{f}(x^i) & \iff\\
    c + g^\top x^i + \frac{1}{2}x^{i\top} H x^i & = \tilde{f}(x^i) & \iff\\
        c + g^\top x^i + \frac{1}{2}x^{i\top} H x^i - m(x) & = \tilde{f}(x^i) - m(x)& \iff\\
    (x^i - x)^\top g + \frac{1}{2}(x^i - x)^\top H(x^i- x) + (x^i - x)^\top H x & = \tilde{f}(x^i) - f(x) - e^f(x).
    \end{array}
    $$
    Now, substituting in the expression for $e^g$, we have for each $i=0,1,\dots,p$
    $$
    (x^i - x)^\top (e^g(x) + \nabla f(x)) + \frac{1}{2}(x^i-x)^\top H(x^i-x) 
    = \tilde{f}(x^i) - f(x) - e^f(x).
    $$
    Trivially, we also have
    $$
    (x^i - x)^\top (e^g(x) + \nabla f(x)) + \frac{1}{2}(x^i-x)^\top H(x^i-x) 
    = \tilde{f}(x^i) - f(x) - e^f(x) + f(x^i) - f(x^i).
    $$
    
    By Taylor's theorem and the definition of $\Delta$, $f(x) + \nabla f(x)^\top(x^i - x) - f(x^i) \in \mathcal{O}(\Delta^2)$. Thus there exists $c_i\in\Reals$ independent of $\Delta$ such that, for all $i=0,1,\dots,p$,
    \begin{equation}\label{eq.eq1}
    (x^i-x)^\top e^g(x) + \frac{1}{2}(x^i-x)^\top H(x^i-x) = c_i\Delta^2  - e^f(x)+ \tilde{f}(x^i) - f(x^i).
    \end{equation}
    Now, subtracting \cref{eq.eq1} for $i=0$ from each of \cref{eq.eq1} for $i=1,\dots,p$, we get for $i=1,\dots,p$ that
    $$
    (x^i-x^0)^\top e^g(x) + \frac{1}{2}(x^i - x)^\top H (x^i - x) - \frac{1}{2}(x^0 - x)^\top H (x^0 - x) = 
    (c_i - c_0) \Delta^2 + \tilde{f}(x^i) - f(x^i) - \tilde{f}(x^0) + f(x^0).
    $$
    It now follows from the assumptions on $f$ and $\tilde{f}$ and the definitions of $\Delta, \hat{L}^\dagger, \epsilon_{\max}$ and $\epsilon_0$ that 
    $$\|e^g(x)\| \leq \sqrt{p+1}\|\hat{L}^\dagger\| 
    \left[
    \left(L_{\nabla f} + \|H\|\right)\Delta + 
    \left(\frac{\epsilon_0 + \epsilon_{\max}}{\Delta}\right)
    \right]. 
    $$
    \qed
\end{proof}

\replace{}{In} order to extract a meaningful bound from \Cref{thm.mnh_error}, it remains to bound both the quantities $\|\hat{L}^\dagger\|$ and 
 $\|\nabla^2 m(x^0)\|$ appearing in \Cref{thm.mnh_error}.
 The former quantity can be bounded by observations made in the development of \citet{conn2009introduction}[Section 5.3].
The proof of these observations involves an alternative characterization of \Cref{def.lambdapoised} that we will not provide in this manuscript for the sake of concise exposition; see \citet{conn2009introduction}[Section 5.3] for full details. 
We state this bound on $\|\hat{L}^\dagger\|$ in \Cref{prop.bound_pinv}. 
The quantity $\|\nabla^2 m(x^0)\|$ can be bounded by mimicking the proof of a result in
\citet{conn2009introduction}[Theorem 5.7]. 
We state and prove that bound in \Cref{thm.model_hessian_bound}. 

\begin{proposition}
\label{prop.bound_pinv}
    Let $\cX$ be $\Lambda$-poised in the minimum Frobenius norm sense. 
    Let \Cref{ass.L.smooth} hold. 
    Then, $\hat{L}^\dagger$ in the statement of \Cref{thm.mnh_error} automatically satisfies
    $$\| \hat{L}^\dagger\| \leq \sqrt{p+1}\Lambda.$$ 
\end{proposition}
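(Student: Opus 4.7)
The plan is to exploit an alternative characterization of $\Lambda$-poisedness in the minimum Frobenius norm sense developed in \citet{conn2009introduction}[Section 5.3], which couples the norm of the pseudoinverse $\hat{L}^\dagger$ to the coefficient vectors of the minimum Frobenius norm Lagrange polynomials. The high-level strategy is (a) read off the linear parts of the Lagrange polynomials from $W_{\cX}^{-1}$, (b) identify these linear parts, after appropriate scaling, with the columns of $\hat{L}^\dagger$, and then (c) invoke $\Lambda$-poisedness to uniformly bound the columns, aggregating to a $\sqrt{p+1}$ factor via Frobenius-to-spectral norm comparison.

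First, I would recall that each Lagrange polynomial $\ell_i(x) = \alpha_i^\top \mu(x) + \beta_i^\top \nu(x)$ has a linear coefficient block $\tilde{\alpha}_i \in \Reals^d$ (the last $d$ entries of $\alpha_i$), readable directly from $W_{\cX}^{-1}$ as noted in the text following \cref{eq.KKT}. These $\tilde{\alpha}_i$ encode the linear part of the interpolant and serve as the bridge between $\hat{L}^\dagger$ and $\Lambda$. The key structural step is to argue that, up to quadratic corrections absorbed by the Lagrange polynomials themselves, $\hat{L}^\dagger$ is realized column-by-column (with an overall $\Delta$ scaling) from the vectors $\tilde{\alpha}_i$. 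This is precisely the content of the alternative characterization from \citet{conn2009introduction}[Section 5.3], which recasts $\Lambda$-poisedness as an explicit norm bound on an auxiliary matrix assembled from the Lagrange coefficients. Second, I would apply $\Lambda$-poisedness, which via \Cref{def.lambdapoised} guarantees $\max_{x \in \cB(x^0,\Delta)} |\ell_i(x)| \leq \Lambda$ for every $i$. Evaluating $\ell_i$ along unit displacements from $x^0$ within $\cB(x^0,\Delta)$ then bounds each $\|\tilde{\alpha}_i\|$ by $\Lambda/\Delta$, with the $1/\Delta$ absorbed by the normalization in the definition of $\hat{L}$. Aggregating the $p+1$ column bounds in an $\ell_2$ sense (equivalently, bounding the spectral norm of $\hat{L}^\dagger$ by its Frobenius norm) produces the prefactor $\sqrt{p+1}$, and the claimed inequality $\|\hat{L}^\dagger\| \leq \sqrt{p+1}\Lambda$ follows.

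The principal obstacle is making the informal correspondence between $\hat{L}^\dagger$ and $\{\tilde{\alpha}_i\}$ rigorous. This requires carefully tracking how the shift by $x^0$, the rescaling by $\Delta$, and the quadratic portion $\beta_i^\top \nu(x)$ of each $\ell_i$ interact within the underdetermined minimum Frobenius norm construction; the interpolation conditions $\ell_i(x^j) = \delta_{ij}$ deliver the identity only modulo a quadratic error that must be controlled via the minimum Frobenius norm property. As the authors remark, this technical bookkeeping is imported from the \citet{conn2009introduction} treatment rather than reproduced in the manuscript, and a fully self-contained proof would need to port the auxiliary matrix machinery of that section verbatim.
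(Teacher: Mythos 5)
The paper supplies no proof of \Cref{prop.bound_pinv}; it explicitly defers to \citet{conn2009introduction}[Section 5.3], so your proposal is being measured against a citation rather than an argument. Your sketch follows the same route as that reference---read the linear coefficients of the minimum Frobenius norm Lagrange polynomials off $W_{\cX}^{-1}$, bound them via $\Lambda$-poisedness, and aggregate over the $p+1$ polynomials into a Frobenius-norm bound---and the outline is sound. Two of your steps, however, need repair before this is a proof. First, the concern about ``quadratic corrections'' is a red herring: for any unit vector $u$, the antisymmetric difference $\ell_i(x^0+\Delta u)-\ell_i(x^0-\Delta u)=2\Delta\,\nabla\ell_i(x^0)^\top u$ holds \emph{exactly}, because the even (quadratic and constant) parts of $\ell_i$ cancel; combined with \Cref{def.lambdapoised} this gives $\|\nabla\ell_i(x^0)\|\le\Lambda/\Delta$ with no appeal to the minimum Frobenius norm property. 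Second---and this is the step that is false as written---the scaled gradients $\Delta\,\nabla\ell_i(x^0)$ are \emph{not} in general the columns of $\hat{L}^\dagger$. What is true is that the matrix $G=\Delta\,[\nabla\ell_1(x^0)\;\cdots\;\nabla\ell_p(x^0)]$ is a \emph{left inverse} of $\hat{L}$: since minimum Frobenius norm interpolation reproduces linear polynomials (a linear function has $\beta=0$, which is feasible and optimal in \cref{eq.MNH}), we have $\sum_{i=0}^p\ell_i(x)\equiv 1$ and $\sum_{i=0}^p\ell_i(x)\,x^i\equiv x$, and differentiating these identities at $x^0$ yields $\sum_{i=1}^p\nabla\ell_i(x^0)(x^i-x^0)^\top=I_d$, i.e., $G\hat{L}=I_d$. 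The bound then transfers to $\hat{L}^\dagger$ because the Moore--Penrose pseudoinverse is the left inverse of minimal Frobenius norm (any left inverse is $\hat{L}^\dagger+N$ with $N\hat{L}=0$, and the rows of $N$ are orthogonal to those of $\hat{L}^\dagger$), whence
\begin{equation*}
\|\hat{L}^\dagger\|\;\le\;\|\hat{L}^\dagger\|_F\;\le\;\|G\|_F\;\le\;\sqrt{\textstyle\sum_{i=1}^p\Delta^2\|\nabla\ell_i(x^0)\|^2}\;\le\;\sqrt{p}\,\Lambda\;\le\;\sqrt{p+1}\,\Lambda.
\end{equation*}
With the identification in your step (b) replaced by this left-inverse-plus-minimality argument, your proof is complete and self-contained---which is more than the manuscript itself provides.
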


\begin{theorem}
\label{thm.model_hessian_bound}
 Let $\Lambda>0$, and let 
$\cX=\{x^0,x^1,\dots,x^p\}$ be a given set of interpolation points. 
Denote the absolute error in function value at each interpolation point by 
$\epsilon_i = |e(x^i,\xi_i)|.$
Let
$\cB(x^0,\Delta) \subset\Reals^d$ be a ball sufficiently large such that $\cX\subset \cB(x^0,\Delta)$. 
Suppose $f$ is continuously differentiable in an open set $\Omega$ such that $\cB(x^0,\Delta)\subset\Omega$, and suppose $\nabla f$ is Lipschitz continuous in $\Omega$ with constant $L_{\nabla f}$.
Then,
\begin{equation}\label{eq.kappa_bmh}\|\nabla^2 m(x^0)\| \leq 
\displaystyle\frac{4(p+1)\Lambda L_{\nabla f}\sqrt{(d+1)(d+2)}}{\nu(\Delta)}
+ \displaystyle\frac{8\Lambda\sqrt{(d+1)(d+2)}}{\Delta^2\nu(\Delta)}\sum_{i=0}^p \epsilon_i,
\end{equation}
where we have denoted 
\begin{equation}\label{def.nu}\nu(\Delta) = \min\left\{1,\frac{1}{\Delta},\frac{1}{\Delta^2}\right\}.
\end{equation}
\end{theorem}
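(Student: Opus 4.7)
The plan is to mimic the strategy of \citet{conn2009introduction}[Theorem 5.7], adapted to handle noisy interpolation values by exploiting the linearity of the minimum Frobenius norm interpolation operator and then splitting the Hessian into a ``signal'' part (governed by the smoothness of $f$) and a ``noise'' part (governed by the interpolation errors $\epsilon_i$).

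First, I would use the representation \cref{eq.mfn_poly} (with $\tilde{f}$ in place of $f$, since that is what is actually interpolated) together with the observation that the Hessian of a quadratic polynomial is constant, to write
$$\nabla^2 m(x^0) = \sum_{i=0}^p \tilde{f}(x^i)\,\nabla^2 \ell_i = \sum_{i=0}^p f(x^i)\,\nabla^2 \ell_i \;+\; \sum_{i=0}^p e(x^i,\xi_i)\,\nabla^2 \ell_i.$$
Call the two terms $H_f$ and $H_e$, respectively. By linearity of the interpolation operator, $H_f$ is exactly the Hessian (at $x^0$) of the minimum Frobenius norm model one would obtain from the \emph{exact} function values; $H_e$ is the Hessian of a minimum Frobenius norm model of the pointwise error function.

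Next, I would bound $H_f$ by invoking (a lightly recast version of) \citet{conn2009introduction}[Theorem 5.7]. That argument uses $\Lambda$-poisedness to bound $\|\nabla^2 \ell_i\|$ via the KKT system \cref{eq.KKT} after rescaling the interpolation nodes to the unit ball, then combines this with a Taylor expansion of $f$ around $x^0$ (invoking \Cref{ass.L.smooth}) to control the contribution of each $f(x^i)$. The $\sqrt{(d+1)(d+2)}$ factor appears when converting between the Frobenius norm (natural for the minimum-norm objective in \cref{eq.MNH}) and the operator norm appearing in \cref{eq.kappa_bmh}, accounting for the dimension $(d+1)(d+2)/2$ of the quadratic polynomial space. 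For $H_e$, I would bound directly by the triangle inequality, $\|H_e\| \leq \sum_{i=0}^p \epsilon_i\,\|\nabla^2 \ell_i\|$, and then reuse the $\Lambda$-poisedness bound on each $\|\nabla^2 \ell_i\|$ derived for $H_f$. Summing over $i$ yields the second summand of \cref{eq.kappa_bmh}.

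The main obstacle is carefully tracking the scaling factor $\nu(\Delta)$. The canonical $\Lambda$-poisedness estimates for the Hessians of minimum Frobenius norm Lagrange polynomials in \citet{conn2009introduction} are stated after rescaling to the unit ball, after which unscaling produces a $1/\Delta^2$ factor. For the bound to hold uniformly for arbitrary $\Delta > 0$, I would separately handle $\Delta \leq 1$ (where the unit-ball bound applies directly and gives $\nu(\Delta) = 1$) and $\Delta > 1$ (where the unscaling introduces extra factors of $\Delta$ and $\Delta^2$); taking the worst case over the two regimes produces the $\min\{1, 1/\Delta, 1/\Delta^2\}$ appearing in the denominator of \cref{eq.kappa_bmh}.

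Finally, combining the bounds on $\|H_f\|$ and $\|H_e\|$ via the triangle inequality $\|\nabla^2 m(x^0)\| \leq \|H_f\| + \|H_e\|$ delivers \cref{eq.kappa_bmh}, with the first term coming from the smoothness contribution and the second from the aggregated interpolation errors.
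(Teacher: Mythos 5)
Your proposal is correct and follows essentially the same route as the paper's proof: the paper likewise writes $\|\nabla^2 m(x^0)\| \leq \sum_{i=0}^p |\tilde{f}(x^i)|\,\|\nabla^2\ell_i(x)\| \leq \sum_{i=0}^p (|f(x^i)| + \epsilon_i)\,\|\nabla^2\ell_i(x)\|$, which is exactly your signal-plus-noise split, and then combines the Conn--Scheinberg--Vicente bound $\|\nabla^2\ell_i(x)\| \leq 8\Lambda\sqrt{(d+1)(d+2)}/(\Delta^2\nu(\Delta))$ with the subtract-a-linear-polynomial/Taylor argument giving $|f(x^i)| \leq \tfrac{L_{\nabla f}}{2}\Delta^2$. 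No substantive differences.
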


\begin{proof}{\textbf{Proof.}} Much of the proof of \citet{conn2009introduction}[Theorem 5.7] holds here. 
    In particular, 
    $$\|\nabla^2 \ell_i(x)\| \leq \displaystyle\frac{8\Lambda\sqrt{(d+1)(d+2)}}{\Delta^2\nu(\Delta)}$$
    holds for each $\ell_i(x)$, because it is only a property of minimum Frobenius norm Lagrange polynomials and is independent of any assumptions on noise. 

    As in the proof of \citet{conn2009introduction}[Theorem 5.7], we can assume without loss of generality that $f(x^0) = 0$ and that $\nabla f(x^0) = 0$ by subtracting an appropriate linear polynomial from $f$. 
    If we subtract the same linear polynomial from the minimum Frobenius norm model $m(x)$, then the model Hessian $\|\nabla^2 m(x)\|$ remains unchanged. 

    By our subtraction of the linear polynomial, we have from Taylor's theorem that
    $$\displaystyle\max_{x\in \cB(x^0,\Delta)} |f(x)| \leq \frac{L_{\nabla f}}{2}\Delta^2. $$
    Thus, by \cref{eq.mfn_poly},
    $$
    \begin{array}{rl}
    \|\nabla^2 m(x^0)\| \leq \displaystyle\sum_{i=0}^p |\tilde{f}(x^i)| \|\nabla^2 \ell_i(x)\|
    \leq & \displaystyle\sum_{i=0}^p (|f(x^i)| + \epsilon_i) \|\nabla^2 \ell_i(x)\| \\
    \leq & \displaystyle\sum_{i=0}^p 
    \left[\frac{4\Lambda L_{\nabla f}\sqrt{(d+1)(d+2)}}{\nu(\Delta)} + 
    \frac{8\Lambda\sqrt{(d+1)(d+2)}\epsilon_i}{\Delta^2\nu(\Delta)}
    \right],\\
    \end{array}
    $$
    as we meant to show. 
    \qed
\end{proof}

\Cref{thm.model_hessian_bound} establishes that the norm of the model Hessian is upper bounded, either deterministically or with high probability, when $\tilde f$ is a zeroth-order oracle as defined in \Cref{def:zoo}.
\Cref{thm.mnh_error}, \Cref{prop.bound_pinv}, and \Cref{thm.model_hessian_bound} taken together motivate two algorithmic features that we now discuss. 

\subsubsection{Decoupling the Sampling Radius from the Trust Region Radius}
\label{sec:decoupling}
Combining the results in \Cref{thm.mnh_error}, \Cref{prop.bound_pinv}, and \Cref{thm.model_hessian_bound},
we see that a bound on $\|\nabla f(x) - \nabla m(x)\|$ is minimized (as a function of $\Delta$) provided
$$\Delta = \min\left\{\displaystyle\sqrt{
\frac{\epsilon_0 + \epsilon_{\max} + 8\Lambda\sqrt{(d+1)(d+2)}\sum_{i=0}^p\epsilon_i}
{L_{\nabla f}(1 + 4(p+1)\sqrt{(d+1)(d+2)})}
},1\right\}.$$ 
If we make the coarse assumption that 
each $\epsilon$ term $(\epsilon_i, \epsilon_0, \epsilon_{\max})$ is bounded by $\mu\epsilon_f$ for some small multiple $\mu\geq 1$ 
(for example, this is true with $\mu=1$ by definition in the deterministically bounded noise regime),
then we can simplify this result to 
$$\Delta \approx 
\min\left\{\sqrt{\frac{2 + 8\Lambda(p+1)\sqrt{(d+1)(d+2)}}
{1 + 4(p+1)\sqrt{(d+1)(d+2)}}}
\sqrt{\frac{\mu\epsilon_f}
{L_{\nabla f}}
},1\right\}.$$
Moreover, assuming $\Lambda\approx 1$ (which we discuss more in the next subsection), this greatly simplifies to
$$\Delta \approx \displaystyle\sqrt{
\frac{2\mu\epsilon_f}
{L_{\nabla f}},
}$$
which is generally less than 1, since we expect $\epsilon_f$ to be small relative to $L_{\nabla f}$. 
Motivated by this observation, our algorithm explicitly decouples the trust-region radius from the sampling radius.
Our algorithm maintains running estimates, $\tilde\epsilon_f$ and $\tilde L_{\nabla f}$, of $\epsilon_f$ and $L_{\nabla f}$, respectively.
Given parameters $\mu\geq 1$ and $\bar\Lambda\geq 1$, if the current trust region is $\cB(x^k,\Delta_k)$, then our algorithm will ensure that a set of interpolation points $\cX$ is $\bar\Lambda$-poised on the set $\cB(x^k, \max\{\Delta_k, \sqrt{2\mu\tilde\epsilon_f/\tilde{L}_{\nabla f}}\})$. 
This is a departure from standard model-based methods. 

\subsubsection{Careful Maintenance of Poisedness}
Combining the results in \Cref{thm.mnh_error}, \Cref{prop.bound_pinv}, and \Cref{thm.model_hessian_bound}, we see that 
the first-order error made by a minimum Frobenius norm model on a ball of radius $\Delta$, with a set of interpolation points that are $\Lambda$-poised on that ball, scales quadratically with $\Lambda$ in the worst case. 
Thus, when selecting a set $\cX$ of interpolation points, we intend to keep $\Lambda$ bounded.
This is achievable by employing an analog of \citet{conn2009introduction}[Algorithm 6.3], which we state in \Cref{alg.improve_poisedness}.

\begin{algorithm2e}
  \SetAlgoNlRelativeSize{-4}
    \KwIn{Desired upper bound on poisedness $\bar\Lambda > 1$, initial  set of points $\cX = \{x^0, x^1, \dots, x^p\}$ poised in the minimum Frobenius norm sense, radius $\Delta$.}
    
    \For{$k=1,2, \ldots$}{    
         Obtain minimum Frobenius norm Lagrange polynomials for $\cX$, $\{\ell_i(x): i=0,1,\dots,p\}$. \label{line.mfnlp}
           
       Compute \label{line.trsps}
       $$\Lambda_{k-1} = \displaystyle\max_{i=0,1,\dots,p} \max_{x\in \cB(x^0,\Delta)} |\ell_i(x)|.$$

       \eIf{$\Lambda_{k-1} > \bar\Lambda$ \label{line.arewedone}}{
        $i_k \gets\arg\displaystyle\max_{i=0,1,\dots,p} \max_{x\in \cB(x^0,\Delta)} |\ell_i(x)|$

        $x^+\gets \arg\displaystyle\max_{x\in \cB(x^0,\Delta)} |\ell_{i_k}(x)|$

        $\cX\gets \cX\cup\{x^+\}\setminus\{x^{i_k}\}$
       }
       {
        $\textbf{Return } \cX$, a $\bar\Lambda$-poised set. 
        }
    }
\caption{Improving poisedness of $\cX$ \label{alg.improve_poisedness}}
\end{algorithm2e}

It is proven in \citet{conn2009introduction}[Theorem 6.6] that \Cref{alg.improve_poisedness} terminates finitely, provided one solves the trust-region subproblems in \Cref{line.trsps} with sufficient accuracy  to be able to properly evaluate the condition in \Cref{line.arewedone}.
We also highlight that \Cref{alg.improve_poisedness} requires the initial $\cX$ to be poised in the minimum Frobenius norm sense. 
This condition is easily tested; 
if $\cX$ is not poised in the minimum Frobenius norm sense, then the inversion of $W_{\cX}$ in \eqref{eq.KKT} required in \Cref{line.mfnlp} will fail. 
In the rare event that $W_{\cX}$ becomes singular to working precision, our optimization method will employ an existing routine found in \texttt{POUNDers} \cite{wild2017chapter} to select a set of $d$ points from the history of all previously evaluated points such that 
(1) the set of $d$ points taken together with $x^0$ form an affinely independent set and
(2) $\|d - x^0\|$ is bounded by a small multiple of $\Delta$. 
We then replace $\cX$ with this set of $d+1$ many points. 
This algorithm is provided in \Cref{alg.affpoints}. 

We remark that \citet{Powell2003} is entirely dedicated to suggestions for developing a method similar to \Cref{alg.improve_poisedness} that is far more efficient, 
in particular avoiding the solution to $\mathcal{O}(p)$ many trust-region subproblems in \Cref{line.trsps}
and 
moreover avoiding explicit storage and inversions of $W_{\cX}$ in \Cref{line.mfnlp} by employing low-rank updates. 
Implementing Powell's suggestions is an avenue for future code development that we intend to pursue.
However, to be absolutely consistent with the theory in this paper, we effectively produced a direct implementation of \Cref{alg.improve_poisedness}. 
This renders the per-iteration linear algebra cost of our method very high compared with standard implementations of model-based optimization methods.
However, the metric we are concerned with in this paper is limiting the number of oracle calls, not per-iteration linear algebra costs, appropriate for real-world situations where oracle calls are computationally expensive. 

\subsection{Statement of DFO Algorithm}
We now explicitly provide an algorithm in the framework of \Cref{alg.TR.noise} appropriate for expensive derivative-free optimization. 
This algorithm is provided in \Cref{alg.DFOTR.noise}.

\Cref{alg.DFOTR.noise} uses $\cX$, free of any subscript notation, to denote an interpolation set; this is intended to remind the reader that we will reuse previously evaluated points as much as reasonably possible, and so $\cX$ generally transcends the iteration count. 
In every iteration $k \in \{0,1,\dots\}$, we first obtain a noise estimate $\tilde\epsilon_f$ and a local gradient Lipschitz constant estimate $\tilde{L}_{\nabla f}$;
we leave the means to compute these estimates intentionally vague in \Cref{alg.DFOTR.noise}, but in \Cref{sec:results} we discuss how one can do this in practice. 
While \Cref{alg.DFOTR.noise} employs the same trust-region mechanism with trust-region radius $\Delta_k$ as in \Cref{alg.TR.noise}, \Cref{alg.DFOTR.noise} explicitly decouples $\Delta_k$ from a \emph{sampling radius} $\bar\Delta_k \geq 
\sqrt{r\tilde\epsilon_f/\tilde{L}_{\nabla f}}$ for some tolerance parameter $r\geq 2$.
This decoupling is done in order
to optimize the accuracy of the model gradient, a principle elucidated in \Cref{sec:decoupling}.   
We then update $\cX$ by removing from $\cX$ any point $x$ satisfying $\|x - \theta_k\| > c_s\bar\Delta_k$, for some parameter $c_s > 1$, to yield the interpolation set $\cX = \{x^0 = \theta_k, x^1, \cdots, x^p \}$. 
As described in \Cref{sec:Geometry}, the cardinality of the interpolation set must satisfy $|\cX| \in [d+1,(d+1)(d+2)/2]$. When $|\cX| > (d+1)(d+2)/2$, we must remove superfluous interpolation points; we elect to remove the  $| \cX| - (d+1)(d+2)/2$ oldest points---oldest referring to the history of when points were added to $\cX$---from $ \cX$.
Let $\cX - \theta_k := \begin{bmatrix}
x^1 - \theta_k & \dots & x^q - \theta_k
\end{bmatrix}$, the matrix of displacements of each interpolation point from $\theta_k$, omitting the zero column resulting from $x^0=\theta_k$.
If  $\cX - \theta_k$ is not full-rank, then we augment  $\cX$ to contain an affinely independent set of points using \Cref{alg.affpoints}. 
We then call \Cref{alg.improve_poisedness} to ensure that  $\cX$ is $\bar\Lambda$-poised in $\cB(\theta_k, \bar\Delta_k)$ in the minimum Frobenius norm sense.

After constructing $\cX$, $\tilde f(x)$ is evaluated 
at any $x \in \cX$ at which $\tilde f$ was not previously evaluated. 
Then, we solve \cref{eq.MNH} (via \cref{eq.KKT}) and use the obtained $\alpha, \beta$ to construct a local quadratic model 
       \begin{equation}
      \label{eq.model.def.quadratic.alg}
    m_k(s) = \tilde{f}(\theta_k) + g_k^T s + s^T  H_k s.
     \end{equation} 
The trust-region subproblem $\{\min m_k(s) : s\in\cB(0, \Delta_k)\} $
is then solved accurately enough
 that the step $s_k$ satisfies the same Cauchy decrease condition  \cref{eq.model.decrease.sufficient} as in \Cref{alg.TR.noise}. 
We remark that attaining \cref{eq.model.decrease.sufficient} is trivial in practice,  since when $\kappa_{\mathrm{fcd}} = 1$, the Cauchy step will suffice.
We then evaluate $\tilde{f}$ at the test point $\theta_k+s_k$ and augment $\cX$ with the test point. 

Next, we compute $\rho_k$ defined as in \cref{eq.rho2.k}. 
We highlight that \cref{eq.rho2.k} (and \cref{eq.rho.k} in \Cref{alg.TR.noise}) is different from the ratio of actual decrease to predicted decrease employed in
classical trust-region methods;
in particular there is an additional factor of $\epsilon_f$ in the numerator of \cref{eq.rho2.k}.
The \emph{relaxed ratio} \cref{eq.rho2.k} was proposed to account for the noise in $\tilde f(\theta_k)$ and $\tilde f(\theta_k+ s_k)$ so that the numerator is not dominated by noise when $\Delta_k$ is small. 
A similar strategy was also considered by \citet{sun2022trust}, except they chose to add an $\epsilon$ term to both the numerator and denominator. 

The criteria to determine whether the test point is accepted or not and the rule of updating $\Delta_k$ are presented
between \cref{line:trust_region_adjustment} and \cref{line:trust_region_adjustment2}. 
It is unchanged from \Cref{alg.TR.noise} in \citet{Cao2023}.

\begin{algorithm2e}[H]
    \KwIn{starting point $\theta_0\in\Reals^d$; initial trust-region radius $\Delta_0>0$; upper bound on trust-region radius $\Delta_{\max}\geq \Delta_0$; trust-region parameters,
    $\eta_1, \eta_2, \gamma \in(0,1)$; tolerance parameter $r\geq 2$;
    initial interpolation set $\cX \supseteq \{x^0 =\theta_0\}$; sampling constant $c_s > 1$; maximum poisedness constant $\bar\Lambda$. }

    \For{$k=0,1,2, \ldots$}{ 
           
        \label{line:get_noise_estimate} Obtain noise estimate $\tilde\epsilon_f$ and local gradient Lipschitz constant estimate $\tilde{L}_{\nabla f}$. 
        Ensure $\tilde{L}_{\nabla f}$ chosen large enough so that $\tilde{L}_{\nabla f} \geq r\tilde\epsilon_f$.\\ \label{line:first_step}
       
        $\bar\Delta_k \gets \max\left\{\Delta_k, \displaystyle\sqrt{\frac{r\tilde\epsilon_f}{\tilde{L}_{\nabla f}}}
        \right\}$.\\
        
           Remove from $\cX$ any point $x$ satisfying
       $\|x - \theta_k\| > c_s\bar\Delta_k$.

       \If{$|\cX| > (d+1)(d+2)/2$}{
       Remove the $|\cX| - (d+1)(d+2)/2$ oldest obtained points from $\cX$. 
       }

       \If{$\rank(\cX - \theta_k) < d$}{
\label{line:affinepoints} Augment $\cX$ to contain an affinely independent set of points using \Cref{alg.affpoints}. 
       }

 Call \Cref{alg.improve_poisedness} with $\bar\Lambda$,  $\cX$, and $\bar\Delta_k$ to obtain new $\cX$. \\ \label{line:improve_geometry}

       Obtain evaluations $\tilde{f}(x)$ at any $x\in \cX$ for which we have no evaluation. %

       Solve \cref{eq.MNH} with $\cX$ to obtain $\alpha, \beta$, and construct local model  \cref{eq.model.def.quadratic.alg}.       
       \\ %
        
      Compute $s_k$ as an approximate minimizer of $\{\min m_k(s) : s\in\cB(0, \Delta_k)\} $
      such that $s_k$ satisfies \cref{eq.model.decrease.sufficient}.

 Evaluate $\tilde{f}(\theta_k + s_k)$, augment $\cX \gets \cX\cup\{\theta_k + s_k\}$. \label{line:evaluate_trial} 

Compute 
\begin{equation}
\label{eq.rho2.k}
    \rho_k=\frac{\tilde f(\theta_k)-\tilde f(\theta_k+ s_k)+r\tilde\epsilon_f}{m_k\left(0\right)-m_k\left(s_k\right)}
\end{equation}
         \eIf{$\rho_k \geq \eta_1$ }{
            \begin{flalign*}
            \mbox{Set $\theta_{k+1}=\theta_k+s_k$ and }
                \Delta_{k+1}= \begin{cases}\min\{\gamma^{-1} \Delta_k, \Delta_{\max}\} & \text { if }\left\|g_k\right\| \geq \eta_2 \Delta_k \\ \gamma \Delta_k, & \text { if }\left\|g_k\right\|<\eta_2 \Delta_k\end{cases}&&
            \end{flalign*} \label{line:trust_region_adjustment}
            }{
            Set $\theta_{k+1}=\theta_k$ and $\Delta_{k+1}=\gamma \Delta_k$. \label{line:trust_region_adjustment2}
        }
    }
\caption{Novel noise-aware model-based trust-region Algorithm\label{alg.DFOTR.noise}}
\end{algorithm2e}

\section{Theoretical Results}\label{sec:theory}

 We first demonstrate that, as we intended, the model gradient of a minimum Frobenius norm quadratic model is a first-order oracle for \hyperref[type1]{Type 1} error.

    \begin{lemma}
    \label{lem.bounded_noise}
    Suppose $\tilde{f}$ is a zeroth-order oracle of \hyperref[type1]{Type 1} for $f$ with constant $\epsilon_f$.
    In \Cref{alg.DFOTR.noise}, suppose $\tilde\epsilon_f$ is known exactly in every iteration; that is, $\tilde\epsilon = \epsilon_f$. 
    Additionally, suppose $\tilde{L}_{\nabla f}$ is a valid Lipschitz constant for $\nabla f$ on any given trust region $\cB(\theta_k, \Delta_k)$. 
    Denote the upper bound in \cref{eq.kappa_bmh} by $\kappa_{\text{bmh}}$. 
    Then there exists $A, B > 0$ such that the model gradient $g_k$ is a first-order oracle for $\nabla f(\theta_k)$  with 
    \begin{equation}
    \label{eq.foo.parameters}
    \begin{split}
       & p_1=1, \quad \kappa_{eg}= 
      (p+1)\bar\Lambda (L_{\nabla f} + \max\{A,B\epsilon_f\}), \\  \epsilon_g = (p+1)\bar\Lambda & \max\left\{
      (L_{\nabla f} + A)\sqrt{\frac{r\epsilon_f}{L_{\nabla f}}} + (B+2)\sqrt{\frac{L_{\nabla f}\epsilon_f}{r}}, 
      A\Delta_{\max}^3 + 2\sqrt{\frac{L_{\nabla f}\epsilon_f}{r}}.
      \right\}
    \end{split}
    \end{equation}
    \end{lemma}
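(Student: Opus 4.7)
The plan is to apply \Cref{thm.mnh_error} at $x=x^0=\theta_k$ with the sampling radius $\bar\Delta_k$ (not the trust-region radius $\Delta_k$), use \Cref{prop.bound_pinv} and \Cref{thm.model_hessian_bound} to plug in concrete bounds for $\|\hat L^\dagger\|$ and $\|\nabla^2 m(\theta_k)\|$, and then translate the resulting bound in $\bar\Delta_k$ into the $\epsilon_g + \kappa_{eg}\Delta_k$ form required by \Cref{def:foo}. The probability parameter is immediate: in \hyperref[type1]{Type 1} noise, $|e(\theta,\xi)|\leq \epsilon_f$ for every realization of $\xi$, so every $\epsilon_i$ appearing in \Cref{thm.mnh_error} and \Cref{thm.model_hessian_bound} is deterministically bounded by $\epsilon_f$, and the resulting deterministic error bound holds with $p_1=1$.

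After invoking the three cited results and absorbing the combinations $4(p+1)\bar\Lambda\sqrt{(d+1)(d+2)}L_{\nabla f}$ and $8(p+1)\bar\Lambda\sqrt{(d+1)(d+2)}$ into the constants $A$ and $B$ respectively, one arrives at
\[
\|\nabla f(\theta_k)-g_k\| \;\le\; (p+1)\bar\Lambda\!\left[\left(L_{\nabla f}+\tfrac{A}{\nu(\bar\Delta_k)}+\tfrac{B\epsilon_f}{\bar\Delta_k^{\,2}\nu(\bar\Delta_k)}\right)\bar\Delta_k + \tfrac{2\epsilon_f}{\bar\Delta_k}\right].
\]
The remainder is a case analysis driven by the decoupling of $\bar\Delta_k$ from $\Delta_k$ in \Cref{line:first_step} of \Cref{alg.DFOTR.noise}. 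When $\Delta_k \ge \sqrt{r\epsilon_f/L_{\nabla f}}$, $\bar\Delta_k = \Delta_k$, and I further split on whether $\bar\Delta_k\le 1$ (so $\nu(\bar\Delta_k)=1$) or $\bar\Delta_k>1$ (so $\nu(\bar\Delta_k)=1/\bar\Delta_k^2$). In the small-$\bar\Delta_k$ regime, the $B\epsilon_f/\bar\Delta_k$ contribution collapses via $\epsilon_f/\bar\Delta_k\le \sqrt{L_{\nabla f}\epsilon_f/r}$ into an $\epsilon_g$ term of size $(p+1)\bar\Lambda(B+2)\sqrt{L_{\nabla f}\epsilon_f/r}$, leaving $(p+1)\bar\Lambda(L_{\nabla f}+A)\Delta_k$ as the coefficient of $\Delta_k$. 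In the large-$\bar\Delta_k$ regime, $\|\nabla^2 m(\theta_k)\|$ contributes $A\bar\Delta_k^2+B\epsilon_f$, which produces the $(p+1)\bar\Lambda\,B\epsilon_f$ term in $\kappa_{eg}$ (hence the $\max\{A,B\epsilon_f\}$) and an $A\Delta_k^3$ term bounded by $A\Delta_{\max}^3$ that must be parked in $\epsilon_g$. When $\Delta_k<\sqrt{r\epsilon_f/L_{\nabla f}}$, one has $\bar\Delta_k = \sqrt{r\epsilon_f/L_{\nabla f}}$; substituting this value everywhere produces a bound entirely independent of $\Delta_k$ of the form $(p+1)\bar\Lambda\bigl[(L_{\nabla f}+A)\sqrt{r\epsilon_f/L_{\nabla f}}+(B+2)\sqrt{L_{\nabla f}\epsilon_f/r}\bigr]$, which is absorbed entirely into $\epsilon_g$ and recovers the first argument of the $\max$ in the lemma. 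Taking the worst case across all cases gives the stated $\kappa_{eg}$ and $\epsilon_g$.

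The main obstacle I anticipate is the bookkeeping in the case analysis --- in particular the three-way interplay between the sampling safeguard $\bar\Delta_k\ge\sqrt{r\epsilon_f/L_{\nabla f}}$, the piecewise definition of $\nu(\bar\Delta_k)$, and the need to re-express a bound naturally phrased in $\bar\Delta_k$ as a bound of the form $\epsilon_g+\kappa_{eg}\Delta_k$ in $\Delta_k$. The cap $\Delta_k\le \Delta_{\max}$ is what rescues the ``large $\Delta_k$'' subcase by turning an otherwise uncontrolled $\Delta_k^3$ term into the constant $A\Delta_{\max}^3$ contribution to $\epsilon_g$; without this cap, the result would not hold in the form stated.
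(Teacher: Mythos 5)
Your proposal is correct and follows essentially the same route as the paper's proof: it combines \Cref{thm.mnh_error}, \Cref{prop.bound_pinv}, and \Cref{thm.model_hessian_bound} into the bound $(p+1)\bar\Lambda\bigl[(L_{\nabla f}+\|H_k\|)\bar\Delta_k + 2\epsilon_f/\bar\Delta_k\bigr]$ with the same constants $A$ and $B$, and then performs the identical three-case analysis (sampling radius equal to $\Delta_k$ with $\Delta_k\le 1$, with $\Delta_k>1$, and the floor case $\bar\Delta_k=\sqrt{r\epsilon_f/L_{\nabla f}}$), using the decoupling to control $1/\bar\Delta_k$ and the cap $\Delta_{\max}$ to absorb the cubic term. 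No gaps.
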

    \begin{proof}{\textbf{Proof.}} 
      Combining \Cref{thm.mnh_error}, \Cref{prop.bound_pinv}, and \Cref{thm.model_hessian_bound},  
      we have that, on any iteration $k$ of \Cref{alg.DFOTR.noise}, the model gradient $g_k$ satisfies
      \begin{equation}
      \label{eq.gradient_error}
          \|\nabla f(\theta_k) - g_k\| \leq 
      (p+1)\bar\Lambda\left[(L_{\nabla f} + \|H_k\|) \bar\Delta_k + \frac{2\epsilon_f}{\bar\Delta_k}\right],
      \end{equation}
      where, because we are in the deterministically bounded regime and $\cX \in \cB(\theta_k, \bar \Delta_k)$, we have from \cref{eq.kappa_bmh} that 
      \begin{equation}
      \label{eq.def.A.B}
          \|H_k\| \leq \displaystyle\frac{A}{\nu( \bar \Delta_k)}
+ \displaystyle\frac{B}{ \bar \Delta_k^2\nu( \bar \Delta_k)}\epsilon_f, \quad A:=4(p+1)\bar\Lambda L_{\nabla f}\sqrt{(d+1)(d+2)}, \quad B:= 8(p+1)\bar\Lambda\sqrt{(d+1)(d+2)},
      \end{equation}
for $\nu(\cdot)$ defined in \Cref{def.nu}.   We proceed in three cases.

      \textbf{Case 1:} Suppose $\bar\Delta_k = \Delta_k$ and $\Delta_k \leq 1$. 
      Note that this implies $\bar\Delta_k \geq \sqrt{\frac{r\epsilon_f}{L_{\nabla f}}}$, and so 
      $\bar\Delta_k^{-1} \leq \sqrt{\frac{L_{\nabla f}}{r\epsilon_f}}$. 
      Making use of this fact, we have
$$\begin{array}{rl}
\|\nabla f(\theta_k) - g_k\| \leq & (p+1)\bar\Lambda\left[
(L_{\nabla f} + A)\Delta_k + \frac{(B+2)\epsilon_f}{\Delta_k}
\right]\\
\leq & 
(p+1)\bar\Lambda\left[(L_{\nabla f} + A)\Delta_k + 
(B+2)\sqrt{\frac{L_{\nabla f}\epsilon_f}{r}}
\right] .
\end{array}
$$

\textbf{Case 2:} Suppose $\bar\Delta_k = \Delta_k$ and $\Delta_k > 1$. 
Recall the algorithmic parameter $\Delta_{\max}$, which enforces $\Delta_k \leq \Delta_{\max}$ for all $k$.
Then, 
$$\begin{array}{rl}
\|\nabla f(\theta_k) - g_k\| \leq & (p+1)\bar\Lambda\left[
(L_{\nabla f} + A\Delta_k^2 + B\epsilon_f)\Delta_k + \frac{2\epsilon_f}{\Delta_k}
\right]\\
\leq & 
(p+1)\bar\Lambda\left[
(L_{\nabla f} + B\epsilon_f)\Delta_k + A\Delta_{\max}^3 + \frac{2\epsilon_f}{\Delta_k}
\right] \\
\leq & 
(p+1)\bar\Lambda\left[
(L_{\nabla f} + B\epsilon_f)\Delta_k + A\Delta_{\max}^3 + 2\sqrt{\frac{L_{\nabla f}\epsilon_f}{r}}/
\right] .\\
\end{array}
$$

\textbf{Case 3:} Suppose $\bar\Delta_k = 
\displaystyle\sqrt{\frac{r\epsilon_f}{L_{\nabla f}}}$.
Recall that \Cref{alg.DFOTR.noise} chooses a Lipschitz constant $L_{\nabla f}$  sufficiently large such that $\bar\Delta_k \leq 1$ in this case. 
Then
 $$\begin{array}{rl}
 \|\nabla f(\theta_k) - g_k\| \leq & (p+1)\bar\Lambda\left[
 (L_{\nabla f} + A + \frac{B\epsilon_f}{\bar\Delta_k^2})\bar\Delta_k + \frac{2\epsilon_f}{\bar\Delta_k} 
 \right]\\
 = &
 (p+1)\bar\Lambda\left[
 (L_{\nabla f} + A)\bar\Delta_k + \frac{B+2}{\bar\Delta_k}\epsilon_f
 \right]\\
= &
 (p+1)\bar\Lambda\left[
(L_{\nabla f} + A)\sqrt{\frac{r\epsilon_f}{L_{\nabla f}}} + (B+2)\sqrt{\frac{L_{\nabla f}\epsilon_f }{r}}.
\right].\\
 \end{array}
 $$

The desired result is therefore satisfied with the parameters defined in \cref{eq.foo.parameters}.
      \qed
    \end{proof}

As a brief remark, we note from the analysis in \Cref{lem.bounded_noise} that the maximum in the $\epsilon_g$ term involving the unappealing $\Delta_{\max}^3$ term is  realized only on iterations in which $\Delta_k > 1$.
In fact, $\Delta_{\max}$ can be replaced with $\Delta_k$ on such iterations.
The use of $\Delta_{\max}$ in this analysis is pessimistic but was chosen to make clear how $\epsilon_g$ can be derived as a quantity independent of $\Delta_k$, as is required for the definition of a first-order oracle. 
The next lemma  shows that the model gradient of a minimum Frobenius norm quadratic model is also a first-order oracle for \hyperref[type2]{Type 2} error.

    \begin{lemma}
    \label{lemma.subexponential_noise}
        Suppose $\tilde{f}$ is a zeroth-order oracle for $f$ of \hyperref[type2]{Type 2} with constants $c$ and $\epsilon_f$. 
        Let $\tilde{L}_{\nabla f}$, $\tilde{\epsilon}_f$ and $\kappa_{\text{bmh}}$ be as in \Cref{lem.bounded_noise}. 
        Then there exist $\mu > 1$ and $A, B > 0$ such that the model gradient $g_k$ is a first-order oracle for $\nabla f(\theta_k)$ with 
        \begin{equation}
            \begin{split}
              &  p_2  = 1 - 2(p+1)\exp(c(1-\mu)\epsilon_f), \quad \kappa_{eg} = (p+1)\bar\Lambda (L_{\nabla f} + \max\{A,B\mu\epsilon_f\})\\
      \epsilon_g &= 
      (p+1)\bar\Lambda\max\left\{
      (L_{\nabla f} + A)\sqrt{\frac{r\epsilon_f}{L_{\nabla f}}} + (B\mu+2)\sqrt{\frac{L_{\nabla f}\epsilon_f}{r}}, 
      A\Delta_{\max}^3 + 2\sqrt{\frac{L_{\nabla f}\epsilon_f}{r}}
      \right\} .
            \end{split}
        \end{equation}
    \end{lemma}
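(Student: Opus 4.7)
The plan is to reduce the subexponential noise regime to a ``\hyperref[type1]{Type 1}-like'' setting on a high-probability event and then invoke the analysis already carried out in \Cref{lem.bounded_noise} verbatim with only a minor adjustment to the effective noise bound. Since nothing in the construction of the minimum Frobenius norm model changes between the two regimes, the only thing that needs to change is the pointwise bound on $|e(x^i,\xi_i)|$.

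First I would fix any $\mu > 1$ and apply the subexponential tail bound \cref{eq.subexp} with threshold $t = \mu\epsilon_f$ at each of the $p+1$ interpolation points $x^i \in \cX$. For each $i$ this yields $\Prob[|e(x^i,\xi_i)| > \mu\epsilon_f] \leq 2\exp(c(1-\mu)\epsilon_f)$, where the factor of $2$ accounts for the two-sided tails consistent with the bookkeeping already used in \Cref{theorem.complexity.cao.2.simple}. A union bound over $i=0,1,\dots,p$ then gives $\Prob[\mathcal{E}] \geq 1 - 2(p+1)\exp(c(1-\mu)\epsilon_f) = p_2$, where $\mathcal{E}$ is the event that $|e(x^i,\xi_i)| \leq \mu\epsilon_f$ holds simultaneously at every interpolation point.

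I would then condition on $\mathcal{E}$. On this event every noise realization feeding into the model is uniformly bounded by $\mu\epsilon_f$, so \Cref{thm.mnh_error}, \Cref{prop.bound_pinv}, and \Cref{thm.model_hessian_bound} apply with $\mu\epsilon_f$ in place of the deterministic noise bound. In particular, the model Hessian bound becomes $\|H_k\| \leq A/\nu(\bar\Delta_k) + (B\mu\epsilon_f)/(\bar\Delta_k^2 \nu(\bar\Delta_k))$, with $A$ and $B$ exactly as defined in \cref{eq.def.A.B}. From here the argument proceeds identically to \Cref{lem.bounded_noise}, walking through the same three cases according to whether $\bar\Delta_k = \Delta_k \leq 1$, $\bar\Delta_k = \Delta_k > 1$, or $\bar\Delta_k = \sqrt{r\epsilon_f/L_{\nabla f}}$. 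Because \Cref{alg.DFOTR.noise} still sets $\bar\Delta_k$ using the estimate $\tilde\epsilon_f = \epsilon_f$ (which is unchanged by the probabilistic argument), the factor $\mu$ enters only through the $B$-dependent Hessian contribution and not through the $\bar\Delta_k^{-1}$ terms, recovering the stated $\kappa_{eg}$ and $\epsilon_g$.

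The main difficulty is not conceptual but bookkeeping: tracking exactly where $\mu$ appears in the final bounds. It multiplies $B$ inside both $\kappa_{eg}$ (as $B\mu\epsilon_f$) and $\epsilon_g$ (as $B\mu+2$), but the $\sqrt{L_{\nabla f}\epsilon_f/r}$ and $\sqrt{r\epsilon_f/L_{\nabla f}}$ factors are inherited from the algorithm's choice of sampling radius and so remain free of $\mu$. Once this accounting is done carefully across the three cases, the claim follows, and the probability of the oracle guarantee is precisely $p_2$ as stated.
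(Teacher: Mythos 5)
Your proof takes essentially the same route as the paper's: a union bound over the $p+1$ interpolation points at threshold $\mu\epsilon_f$ reduces everything to the Type 1 analysis of \Cref{lem.bounded_noise} with $\epsilon_f$ replaced by $\mu\epsilon_f$ in the Hessian contribution, followed by the same three cases. One correction: the factor of $2$ in your per-point bound $\Prob_\xi[|e(x^i,\xi_i)|>\mu\epsilon_f]\le 2\exp(c(1-\mu)\epsilon_f)$ is spurious, since \cref{eq.subexp} already bounds the tail of the \emph{absolute} error; the paper instead arrives at the $2$ in $p_2$ by union-bounding two separate failure events (one for $\epsilon_0+\epsilon_{\max}\ge 2\mu\epsilon_f$ entering the $\bar\Delta_k^{-1}$ term, one for $\sum_i |e(x^i,\xi_i)|\ge\mu(p+1)\epsilon_f$ entering the Hessian bound), whereas your single conditioning event with the correct one-sided tail would give the slightly stronger probability $1-(p+1)\exp(c(1-\mu)\epsilon_f)\ge p_2$, so the claim holds either way. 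Note also that on your event $\mathcal{E}$ the term $(\epsilon_0+\epsilon_{\max})/\bar\Delta_k$ is bounded by $2\mu\epsilon_f/\bar\Delta_k$, so strictly the additive constants in $\epsilon_g$ should read $2\mu$ rather than $2$ --- a bookkeeping slip that the paper's own statement and proof share, and not one that affects the structure of the argument.
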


    \begin{proof}{\textbf{Proof.}}        By the same reasoning that led to \cref{eq.gradient_error},
        we have that in the $k$th iteration of \Cref{alg.DFOTR.noise},
        $$\|\nabla f(\theta_k) - g_k\| \leq 
      (p+1)\bar\Lambda\left[(L_{\nabla f} + \|H_k\|) \bar\Delta_k + \frac{\epsilon_0 + \epsilon_{\max}}{\bar\Delta_k}\right].$$

      We have only to yield a tail bound on $\epsilon_0 + \epsilon_{\max}$, which we do coarsely
      \footnote{If we cast stronger assumptions about the moment-generating function of $\tilde{f}$, tighter bounds could be yielded via Chernoff bound arguments. 
      To retain the same general definition of zeroth-order oracle that is  stated only in terms of tail bounds, as employed in \citet{Cao2023}, we do not make such assumptions here.}
      with a union bound.
      That is,
      denoting $X_i = |e(x^i,\xi_i)|$ (recalling \cref{eq.error_quantity} for the definition of $e(\theta,\xi)$)
      and letting $\xi$ be a vector of realized random variables $\xi_i$, we have, 
      for any $t\geq 0$,
      $$
      \begin{array}{rl}
      \Prob_{\xi}\left[X_0 + \displaystyle\max_{i=1,\dots,p} X_i  \geq t \right]     
      \leq &
      \Prob_{\xi}\left[X_0 \geq \frac{t}{2}\right]
      +
      \Prob_{\xi}\left[\displaystyle\max_{i=1,\dots,p} X_i \geq \frac{t}{2}\right]\\
      \leq & 
      \exp(c(\epsilon_f - \frac{t}{2})) +
      \Prob_{\xi}\left[\displaystyle\max_{i=1,\dots,p} X_i \geq \frac{t}{2}\right]      \\
      \leq &
      \exp(c(\epsilon_f - \frac{t}{2})) + 
      \displaystyle\sum_{i=1}^p \Prob_{\xi}[X_i > \frac{t}{2}]\\
      \leq &
       (p+1)\exp(c(\epsilon_f - \frac{t}{2})).\\
      \end{array}
      $$
To work with the same quantities as employed in \Cref{lem.bounded_noise}, 
we let $\mu>1$ denote an arbitrary multiplier and consider the special case
$$\Prob_{\xi}\left[X_0 + \max_{i=1,\dots,p} X_i  \geq 2\mu\epsilon_f \right] 
\leq (p+1)\exp(c(1-\mu)\epsilon_f) := p_1(\mu).
$$
      
    Because we are in the independent subexponential noise regime,
    we this time obtain from \cref{eq.kappa_bmh} that
$$\|H_k\| \leq 
\displaystyle\frac{4(p+1)\Lambda L_{\nabla f}\sqrt{(d+1)(d+2)}}{\nu(\bar \Delta_k)}
+ \displaystyle\frac{8\Lambda\sqrt{(d+1)(d+2)}}{\bar\Delta_k^2\nu(\bar\Delta_k)}\sum_{i=0}^p X_i.$$    
      Employing the same union bound argument as before,
      $$
      \Prob_{\xi}\left[
      \sum_{i=0}^p X_i \geq t\right] \leq
      (p+1)\exp\left(c\left(\epsilon_f - \frac{t}{p+1}\right)\right). 
      $$
      Let $\mu > 1$ be the same arbitrary multiplier as before. 
     To work with the same quantities employed in \Cref{lem.bounded_noise}, we are concerned with the tail probability
     $$
      \Prob_{\xi}\left[
      \sum_{i=0}^p X_i \geq \mu(p+1)\epsilon_f\right] \leq
      (p+1)\exp\left(c(1-\mu)\epsilon_f\right) = p_1(\mu). 
      $$
    Now, we can bound
    $$\|H_k\| \leq \frac{A}{\nu(\bar\Delta_k)} + \frac{B\mu}{\bar\Delta_k^2\nu(\bar\Delta_k)}\epsilon_f \quad \text{with probability} \hspace{1pc} p_1(\mu),$$
      where $A, B$ are the same as in \Cref{lem.bounded_noise}.
      
      Proceeding in the same three cases as in \Cref{lem.bounded_noise}, one can almost identically conclude that, for any $\mu > 1$, 
      $g_k$ is a first-order oracle with probability $1 - 2p_1(\mu)$ and constants 
      $\kappa_{eg}= 
      (p+1)\bar\Lambda (L_{\nabla f} + \max\{A,B\mu\epsilon_f\})$ and 
      $\epsilon_g = 
      (p+1)\bar\Lambda\max\left\{
      (L_{\nabla f} + A)\sqrt{\frac{r\epsilon_f}{L_{\nabla f}}} + (B\mu+2)\sqrt{\frac{L_{\nabla f}\epsilon_f}{r}}, 
      A\Delta_{\max}^3 + 2\sqrt{\frac{L_{\nabla f}\epsilon_f}{r}}
      \right\}$. 
      \qed     
    \end{proof}

We now immediately conclude the following convergence results for \Cref{alg.DFOTR.noise}.
\Cref{thm.bounded_noise} and \Cref{thm.subexponential_noise} are direct results of \Cref{theorem.complexity.cao.1} and \Cref{theorem.complexity.cao.2.simple}, respectively, and follow when we explicitly consider how $\epsilon_g$ is a function of $\epsilon_f$ when we employ minimum Frobenius norm modeling. 

  \begin{theorem}
     \label{thm.bounded_noise}Suppose \Cref{ass.L.smooth} and  \ref{ass.f.lower.bounded} hold.
    Suppose $\tilde{f}$ is a zeroth-order oracle of \hyperref[type1]{Type 1} for $f$ with constant $\epsilon_f$.
    In \Cref{alg.DFOTR.noise}, suppose $\tilde\epsilon_f$ is known exactly in every iteration; that is, $\tilde\epsilon_f = \epsilon_f$. 
    Additionally, suppose $\tilde{L}_{\nabla f}$ is a valid Lipschitz constant for $\nabla f$ on a given trust region $\cB(\theta_k, \Delta_k)$. 
Let $\{\Theta_k\}$ denote the sequence of random variables with realizations $\{\theta_k\}$ generated by \Cref{alg.DFOTR.noise}.
There exists $\kappa_1$, independent of $\epsilon_f, \epsilon_g$ but dependent on $\kappa_{eg}, p_1$, such that
given any
$$\epsilon > \kappa_1 \sqrt{\epsilon_f},$$
it holds that
$$\Prob\left[\displaystyle\min_{0\leq k \leq T-1} \|\nabla f(\Theta_k)\| \leq \epsilon\right] 
\geq
1 - \exp\left(-\mathcal{O}(T)\right)$$
for any $T\in\mathcal{O}\left(\frac{1}{\epsilon^2}\right)$.
  \end{theorem}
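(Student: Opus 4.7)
The plan is to apply \Cref{theorem.complexity.cao.1} essentially off-the-shelf, using \Cref{lem.bounded_noise} as the bridge that certifies \Cref{alg.DFOTR.noise}'s model gradient is a qualifying first-order oracle in the deterministically bounded regime.

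First, I would invoke \Cref{lem.bounded_noise} under the stated hypotheses (exact knowledge of $\epsilon_f$ in each iteration, $\tilde{L}_{\nabla f}$ a valid local Lipschitz constant). The lemma hands us that the model gradient $g_k$ computed in \Cref{alg.DFOTR.noise} from the minimum Frobenius norm model is a first-order oracle for $\nabla f$ with $p_1 = 1$, and with explicit expressions for $\kappa_{eg}$ and $\epsilon_g$ given in \cref{eq.foo.parameters}. Combined with \Cref{ass.L.smooth} and \Cref{ass.f.lower.bounded}, which are explicitly assumed, this certifies every hypothesis needed to invoke \Cref{theorem.complexity.cao.1}.

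Second, I would specialize the threshold $\epsilon > \kappa(\sqrt{\epsilon_f/(2p_1-1)} + \epsilon_g)$ of \Cref{theorem.complexity.cao.1} to our setting. Since $p_1 = 1$, the factor $\sqrt{\epsilon_f/(2p_1-1)}$ collapses to $\sqrt{\epsilon_f}$. For $\epsilon_g$, each surviving summand in the max from \Cref{lem.bounded_noise} is of the form $C\sqrt{r\epsilon_f/L_{\nabla f}}$ or $C\sqrt{L_{\nabla f}\epsilon_f/r}$ with $C$ depending only on $p, \bar\Lambda, d, L_{\nabla f}, r$ but not on $\epsilon_f$; these immediately take the form $C'\sqrt{\epsilon_f}$. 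Gathering constants and redefining, the threshold reduces to $\epsilon > \kappa_1\sqrt{\epsilon_f}$, where $\kappa_1$ absorbs all problem- and algorithmic-parameter dependencies inherited through $\kappa_{eg}$. The probability and iteration-complexity conclusions then transfer verbatim from \Cref{theorem.complexity.cao.1}.

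The main obstacle, and the one nontrivial piece of bookkeeping, is the $A\Delta_{\max}^3$ summand appearing inside the max that defines $\epsilon_g$: this term is constant in $\epsilon_f$, and so does not a priori collapse into an expression of order $\sqrt{\epsilon_f}$. I would handle it by observing that this summand arises in \Cref{lem.bounded_noise} only through Case 2, where $\Delta_k > 1$; combined with the fact that $\max\{A, B\epsilon_f\} \to A$ as $\epsilon_f \to 0$ (so $\kappa_{eg}$ remains bounded in that limit), one can treat $A\Delta_{\max}^3$ as a $\kappa_{eg}$-dependent constant absorbed into $\kappa_1$, at the cost of restricting the statement to the regime where $\epsilon_f$ is not vanishingly small relative to the fixed algorithmic constant $\Delta_{\max}$. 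Once this is handled, no additional analysis is needed beyond the citation of \Cref{theorem.complexity.cao.1}.
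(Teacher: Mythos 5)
Your proposal is correct and takes essentially the same route as the paper, whose proof is literally the one-line citation of \Cref{theorem.complexity.cao.1} combined with \Cref{lem.bounded_noise}. Your explicit bookkeeping around the $A\Delta_{\max}^3$ summand in $\epsilon_g$ is in fact more careful than the paper itself, which only acknowledges that term's awkwardness in the remark following \Cref{lem.bounded_noise} and otherwise silently absorbs it into $\kappa_1$.
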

  \begin{proof}{\textbf{Proof.}} This follows from \Cref{theorem.complexity.cao.1} and \Cref{lem.bounded_noise} with $p_1$, $\kappa_{eg}$, $\epsilon_{g}$ as defined in \Cref{lem.bounded_noise}.
 \qed  
 \end{proof}

  \begin{theorem}
     \label{thm.subexponential_noise}Suppose \Cref{ass.L.smooth} and  \ref{ass.f.lower.bounded} hold.
    Suppose $\tilde{f}$ is a zeroth-order oracle of \hyperref[type2]{Type 2} for $f$ with constant $\epsilon_f$ and $c$.
    In \Cref{alg.DFOTR.noise}, suppose $\tilde\epsilon_f$ is known exactly in every iteration; that is, $\tilde\epsilon_f = \epsilon_f$. 
    Additionally, suppose $\tilde{L}_{\nabla f}$ is a valid Lipschitz constant for $\nabla f$ on a given trust region $\cB(\theta_k, \Delta_k)$. 
Let $\{\Theta_k\}$ denote the sequence of random variables with realizations $\{\theta_k\}$ generated by  \Cref{alg.DFOTR.noise}.
There exists $\kappa_2$, independent of $\epsilon_f, \epsilon_g$ but dependent on $\kappa_{eg}, p_1, c$, such that
given any
$$\epsilon > \kappa_2 \sqrt{\epsilon_f + 1/c},$$
it holds that
$$\Prob\left[\displaystyle\min_{0\leq k \leq T-1} \|\nabla f(\Theta_k)\| \leq \epsilon\right] 
\geq
1 - 2\exp\left(-\mathcal{O}(T)\right)- \exp\left(-ct/4\right) $$
for any $T\in \mathcal{O}\left(\frac{1}{\epsilon^2}\right)$ and any $t > 0$.  
  \end{theorem}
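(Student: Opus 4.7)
The plan is to mirror the proof of Theorem~\ref{thm.bounded_noise} almost verbatim, substituting Lemma~\ref{lemma.subexponential_noise} for Lemma~\ref{lem.bounded_noise} and Theorem~\ref{theorem.complexity.cao.2.simple} for Theorem~\ref{theorem.complexity.cao.1}. In particular, by invoking Lemma~\ref{lemma.subexponential_noise} we obtain, for any fixed $\mu>1$, that the model gradient $g_k$ produced at each iteration of Algorithm~\ref{alg.DFOTR.noise} is a first-order oracle with success probability $p_1(\mu)=1-2(p+1)\exp(c(1-\mu)\epsilon_f)$ and with $\kappa_{eg}$ and $\epsilon_g$ as given there. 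Since Algorithm~\ref{alg.DFOTR.noise} realizes the abstract framework of Algorithm~\ref{alg.TR.noise} with the tolerance parameter in \cref{eq.rho.k} taken to be $r\tilde\epsilon_f=r\epsilon_f$, the hypotheses of Theorem~\ref{theorem.complexity.cao.2.simple} are met, yielding the stated exponentially decaying failure probability.

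The remaining work is purely algebraic: one must verify that the bound on $\epsilon$ produced by Theorem~\ref{theorem.complexity.cao.2.simple}, namely $\epsilon>\kappa\bigl(\sqrt{(\epsilon_f+1/c)/(2p_0+2p_1-3)}+\epsilon_g\bigr)$, collapses to $\epsilon>\kappa_2\sqrt{\epsilon_f+1/c}$. Inspection of $\epsilon_g$ in Lemma~\ref{lemma.subexponential_noise} shows that each term is of order $\sqrt{\epsilon_f}$, with the constants $L_{\nabla f}$, $A$, $B$, $\bar\Lambda$, $\mu$, $r$, and $\Delta_{\max}$ all fixed. The $A\Delta_{\max}^3$ contribution is bounded uniformly in $k$ by our imposition of $\Delta_k\leq \Delta_{\max}$, and can be absorbed into the overall constant. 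Thus the sum $\sqrt{(\epsilon_f+1/c)/(2p_0+2p_1-3)}+\epsilon_g$ is dominated (up to a constant) by $\sqrt{\epsilon_f+1/c}$, which defines $\kappa_2$.

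The main obstacle is the calibration of $\mu$. To apply Theorem~\ref{theorem.complexity.cao.2.simple} one needs $2p_0+2p_1-3$ bounded away from zero, which requires $p_1(\mu)\geq 1/2$ and hence $c(\mu-1)\epsilon_f\geq \log(4(p+1))$. For small $\epsilon_f$, this forces $\mu$ to grow like $1/(c\epsilon_f)$, so the product $\mu\epsilon_f$ remains of order $1/c$. This is precisely why the $B\mu\epsilon_f$ term in $\kappa_{eg}$ and the $(B\mu+2)\sqrt{L_{\nabla f}\epsilon_f/r}$ piece in $\epsilon_g$ contribute an additive $O(\sqrt{1/c})$ rather than blowing up; this $O(\sqrt{1/c})$ is exactly what combines with the $O(\sqrt{\epsilon_f})$ contributions to recover $\sqrt{\epsilon_f+1/c}$. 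Similarly, $r$ is chosen so that $p_0=1-2\exp(c\epsilon_f(1-r/2))$ is bounded away from $1/2$; since $r\geq 2$ with strict inequality, this is automatic for any $\epsilon_f$ small enough, and otherwise can always be enforced by inflating $r$.

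Once these bookkeeping items are collected, the conclusion of Theorem~\ref{theorem.complexity.cao.2.simple} applies directly, giving the claimed $1-2\exp(-\mathcal{O}(T))-\exp(-ct/4)$ probability bound over $T\in\mathcal{O}(1/\epsilon^2)$ iterations, and no further analytical input is required.
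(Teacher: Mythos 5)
Your proposal takes essentially the same route as the paper: the paper's entire proof is the one-line observation that the result follows from \Cref{theorem.complexity.cao.2.simple} combined with the first-order-oracle guarantee of \Cref{lemma.subexponential_noise}, which is exactly your first paragraph. Your additional bookkeeping on calibrating $\mu$ goes beyond what the paper records (and is a reasonable thing to worry about), though one order claim there is shaky --- with $\mu \sim 1/(c\epsilon_f)$ the term $(B\mu+2)\sqrt{L_{\nabla f}\epsilon_f/r}$ scales like $1/(c\sqrt{\epsilon_f})$ rather than $\sqrt{1/c}$ --- but since the paper itself performs no such calibration, this does not affect the agreement with the paper's argument.
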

  \begin{proof}{\textbf{Proof.}} This follows from \Cref{theorem.complexity.cao.2.simple} and \Cref{lemma.subexponential_noise}, with $p_1$, $\kappa_{eg}$ and $\epsilon_{g}$ as defined in \Cref{lemma.subexponential_noise}.   
 \qed  
 \end{proof}

\section{Numerical Results}\label{sec:results}
We now study the practical performance of an implementation of \Cref{alg.DFOTR.noise} and other methods commonly used for optimizing VQAs.    
\subsection{\texttt{ANATRA}}
We have created a reasonably faithful Python implementation of \Cref{alg.DFOTR.noise}. We refer to this implementation as \texttt{ANATRA}, short for  A Noise-Aware Trust-Region Algorithm. \texttt{ANATRA} differs from \Cref{alg.DFOTR.noise} in a few minor ways, which we now list.
\begin{enumerate}
\item In \Cref{line:improve_geometry} we do not actually run \Cref{alg.improve_poisedness} until a $\bar\Lambda$-poised set is returned. 
Instead, if $\cX$ is not already $\bar\Lambda$-poised, then we run a \emph{single iteration} of \Cref{alg.improve_poisedness} and return an improved, but not necessarily $\bar\Lambda$-poised, interpolation set $\cX$.
With this modification, \Cref{alg.improve_poisedness} additionally returns a validity flag that is set to True provided the set $\cX$ returned by \Cref{alg.improve_poisedness} is indeed $\bar\Lambda$-poised.  
\item We do not evaluate $\tilde{f}(\theta_k + s_k)$ in \Cref{line:evaluate_trial} unless at least one of the following two conditions holds: either the validity flag returned in \Cref{line:improve_geometry} is True, or the trial step satisfies $\|s_k\| \geq 0.01\Delta_k$. \replace{}{When $\tilde{f}(\theta_k + s_k)$  is not evaluated, we set $\Delta_{k+1}\gets\Delta_k$ and $\theta_{k+1}\gets\theta_k$ and return to \Cref{line:first_step}.}

\item \Cref{alg.TR.noise} was analyzed with a trust-region adjustment step like that in \Cref{line:trust_region_adjustment}.
However, we found it more practical to use the following trust-region adjustment step instead:
If $\rho_k\geq \eta_1$ and $\|s_k\| > 0.75\Delta_k$, then $\Delta_{k+1} \gets \gamma^{-1}\Delta_k$. Otherwise (if $\rho_k < \eta_1$), then $\Delta_{k+1} \gets \gamma\Delta_k$ only provided the validity flag is currently set to True.
We remark that both this change and the previous change are also employed in \texttt{POUNDers} \cite{wild2017chapter}.

\item While the analysis of \Cref{alg.TR.noise} certainly thrives on the nonmonotonicity of noisy function values, as evident in the ratio \cref{eq.rho.k}, we found allowing too much nonmonotonicity can deter practical performance. 
In particular,  we maintain a memory of the lowest noisy function value $\tilde{f}(\theta_{best})$ observed up until iteration $k$ (and its corresponding $\theta_{best}$), which we will denote $\tilde{f}_{best}$. 
If, at the end of the $k$th iteration of \Cref{alg.DFOTR.noise},
$\tilde{f}(\theta_k) \geq \tilde{f}_{best} + r \epsilon_f$,
then we replace the incumbent $\theta_k$ with the stored $\theta_{best}$. While this goes against the theory of \Cref{alg.TR.noise}, and hence \Cref{alg.DFOTR.noise}, this safeguard against \emph{too much} non-monotonicity appears to greatly aid practical performance.  
\end{enumerate}

In terms of parameters for \Cref{alg.DFOTR.noise}, we use values that are considered fairly standard in model-based trust-region methods. 
In particular, we use $\eta_1 = 0.25$, $\gamma = 0.5$, $c_s = \sqrt{d}$, and $\bar\Lambda = \sqrt{d}$. 
We note that owing to our third adjustment above, there is no $\eta_2$ in our method. 
In terms of the parameter that is unique to an algorithm within the framework of \Cref{alg.TR.noise}, we set $r=2$ in \Cref{alg.DFOTR.noise}.

We remark that  a lot of freedom exists in how \Cref{line:get_noise_estimate} is performed. In different numerical tests we will have different methods of obtaining $\tilde\epsilon_f$.
In all of our numerical tests, however, we will always compute $\tilde{L}_{\nabla f}$ in the same way. 
In the first iteration ($k=0$) of \Cref{alg.DFOTR.noise}, we simply let $\tilde{L}_{\nabla f} = 1$. 
Otherwise, 
in any iteration where the validity flag is True at the time \Cref{line:get_noise_estimate} is reached, we update $\tilde{L}_{\nabla f}$ to be the largest eigenvalue of the most recent model Hessian $H_{k-1}$. 
Otherwise, if the validity flag is False when \Cref{line:get_noise_estimate} is reached, we do not update the estimate $\tilde{L}_{\nabla f}$. 

\subsection{Comparator Methods}
In our tests it is naturally appropriate to compare \texttt{ANATRA} with the various solvers wrapped in \texttt{scikit-quant} \citep{lavrijsen2020classical}.
In particular, we choose to compare \texttt{ANATRA} with \texttt{PyBOBYQA} \citep{cartis2019improving}, \texttt{NOMAD} \citep{LeDigabel2011}, and \texttt{ImFil} \citep{imfil}, each as wrapped in \texttt{scikit-quant}. 
Of the four \texttt{scikit-quant} solvers discussed in \citet{lavrijsen2020classical}, we are noticeably missing \texttt{SnobFit} \citep{Huyer2008}. 
This omission is intentional; whereas the other three solvers are local methods, \texttt{SnobFit} was designed for global optimization.
\texttt{SnobFit} constructs local quadratic models of the objective function centered at points selected within a branch-and-bound framework. 
As such, the performance of \texttt{SnobFit} critically depends on the choice of a set of bound constraints. 
Because we are testing the performance of optimization methods for \emph{unconstrained} problems, bound constraints were not provided to any solver.
Preliminary tests indeed showed, as we expected, that the performance of \texttt{SnobFit} was very sensitive to the size of provided bound constraints.  
One should not discount the utility of \texttt{SnobFit} based on these tests, but a practitioner should be aware that \texttt{SnobFit} is only a reasonable comparator in settings where a meaningfully bounded domain can be provided by the user. 

\paragraph{\texttt{ImFil}}
Implicit filtering \citep{imfil} is a method with a fairly sophisticated implementation (which we denote \texttt{ImFil}).
At its core, implicit filtering is an inexact quasi-Newton method.
Gradient approximations are constructed via central finite differences with initially coarse difference parameters.
On any iteration in which the center of the finite difference stencil exhibits a lower objective value than all of the forward or backward evaluation points, a \emph{stencil failure} is declared, and the finite difference parameter is decreased. 
The original motivation for implicit filtering came from noisy problems; in fact, the convergence of implicit filtering to local minimizers can be established when the objective function is the sum of a ground truth smooth function $f_s$ and a high-frequency low-amplitude noisy function $f_n$ provided $f_n\to 0$ at local minimizers of $f_s$ \citep{Gilmore1995}.
While such an assumption may not be precisely satisfied by our problem setting, \texttt{ImFil} remains a competitive method in the presence of more general noise settings, including ours.  

\paragraph{\texttt{NOMAD}}
\texttt{NOMAD} \citep{LeDigabel2011} is a widely used implementation of a mesh-adaptive direct search algorithm \citep{Audet2006}. 
Unlike model-based methods (including all of the methods discussed so far), direct search methods never explicitly construct a model of a gradient or higher-order derivatives. 
Instead, a mesh-adaptive direct search (MADS) algorithm maintains an incumbent point and on each iteration generates a set of points on a(n implicit) mesh covering the feasible region.
If sufficient objective improvement is found on a generated mesh point, the incumbent is updated. 
Because there is so much freedom in how points are generated on the mesh and how the mesh is updated between iterations and because any number of heuristics can be inserted into a so-called search step of MADS, implementation details of MADS algorithms are critical.
While MADS algorithms, and in particular \texttt{NOMAD}, were not designed for noisy problems,  an intuitive argument can be made that direct search methods could be slightly more robust to noise than are model-based methods, since they cannot be affected by poor (interpolation) models that have accumulated too many noise function evaluations. 
Recent work has seen extension of MADS algorithms to stochastic problems \citep{audet2021stochastic}, but these extensions are not currently part of \texttt{NOMAD}.

\paragraph{\texttt{PyBOBYQA}}
\texttt{PyBOBYQA} 
is an extension of \texttt{BOBYQA}, a widely used model-based derivative-free solver \citep{powell2009bobyqa}. 
Because \texttt{ANATRA} also belongs to the class of model-based derivative-free solvers, \texttt{PyBOBYQA} is the closest in spirit to \texttt{ANATRA} among all the comparator methods in this study. 
Of particular note, \texttt{PyBOBYQA} implements several heuristics designed to make \texttt{BOBYQA} more robust to noise, which is of critical importance in our setting.  
This robustness is accomplished primarily via a \emph{restart} mechanism.
\texttt{PyBOBYQA} stores the norm of model gradient and Hessian norms over several past iterations. If an exponentially increasing trend is detected over that short history, 
then the trust-region radius is increased, and several interpolation points, including the trust-region center, are selectively replaced.  

\paragraph{\texttt{SPSA}} 
In addition to the discussed solvers wrapped by \texttt{scikit-quant}, we  test the method of simultaneous perturbation by stochastic approximation (\texttt{SPSA}) \citep{Spall1992}. 
Conceptually, \texttt{SPSA} is a remarkably simple method that computes a coarse stochastic gradient approximation by computing a two-point finite-difference gradient estimate.
At first blush, two-point gradient estimates are immensely attractive, because it appears that considerable optimization progress can be made even when the dimension of the problem is large; contrast these two evaluations with any model-based method that effectively requires $\mathcal{O}(d)$ many function evaluations to even begin the optimization. 
While asymptotic convergence to local minima can be established for \texttt{SPSA} \citep{Gerencsr1997, Kleinman1999}, 
various parameters, including the step-size sequence, can be difficult to tune in practice. 
Despite this, \texttt{SPSA} has become popular within the quantum computing community. A note on the Qiskit documentation \cite{QISKIT-SPSA-2024} states: 

``\texttt{SPSA} can be used in the presence of noise, and it is therefore indicated in situations involving measurement uncertainty on a quantum computation when finding a minimum. If you are executing a variational algorithm using an OpenQASM simulator or a real device, \texttt{SPSA} would be the most recommended choice among the optimizers provided here.''

For these reasons, we have chosen to include Qiskit's implementation of \texttt{SPSA} among the tested solvers. We note that, as with all of the solvers tested in our experiments, we only use default settings, and in particular we do not attempt to tune the learning rate of \texttt{SPSA}. 
The intention behind this choice is to provide further evidence for the well-known empirical observation that trust-region methods (such as \texttt{ANATRA} and \texttt{PyBOBYQA}) are relatively insensitive to hyperparameter selections, when compared with methods based on stochastic approximation, such as \texttt{SPSA}. 

\subsection{Tests on Synthetic Problems}
We strongly believe that an important numerical test of a model-based derivative-free method in the presence of noise is arguably the simplest one imaginable: a quadratic function perturbed by additive noise. 
In particular, for a given dimension $d$ so that $\theta\in\Reals^d$,
\begin{equation}
\label{eq:synthetic_quadratic}
\tilde f(\theta, \xi) = \theta^\top \theta + \xi.
\end{equation}
The objective function in \cref{eq:synthetic_quadratic} is an ideal test function for these methods because, if there were no noise, any derivative-free method that attempts to build a quadratic interpolation model ought to construct a perfect (that is, the quadratic interpolation model exactly equals the objective function) local model of the objective function as soon as $2d+1$ function evaluations are performed on a set of points exhibiting reasonable geometry.
In our tests, in an effort to test both deterministically bounded noise regimes and independent subexponential noise regimes, we let 
$\xi$ in \cref{eq:synthetic_quadratic} be uniformly drawn at random from $[-\epsilon_f,\epsilon_f]$ or else $\xi\sim\mathcal{N}(0,\epsilon_f^2)$, respectively for a noise level $\epsilon_f$ that we choose. 
For the synthetic tests, we explicitly provide \texttt{ANATRA} with the chosen value of $\epsilon_f$ as an input. 
Obviously, in our real problems, we will not have this luxury, but we aim to demonstrate in the synthetic tests how well \texttt{ANATRA} does given idealized estimates. \replace{}{\texttt{ANATRA} and \texttt{PyBOBYQA} both require a construction of a local quadratic model in each iteration and therefore have relatively large per-iteration linear algebra cost as measured in CPU time compared to other solvers, especially when the dimension is large.  
  However, the main computational cost in our problems of interest is the number of function evaluations. 
  With this perspective, the CPU time used for model construction in both algorithms is negligible. Consequently, we present performance plots solely with respect to the number of function evaluations. }

Median performance, over 30 trials for each solver, is illustrated for $d=2$ in \Cref{fig:2dquadratic} and for $d=10$ in \Cref{fig:10dquadratic}.
In these tests we chose $\theta_0$ as the vector of all ones, and we test on noise levels $\epsilon_f\in\{10^{-5},10^{-3},10^{-1}\}$. 
In general, we observe a clear preference for \texttt{ANATRA} except 
for the lowest level of noise ($\epsilon_f = 10^{-5}$) and on the larger problem ($d=10$), in which 
\texttt{PyBOBYQA} finds better-quality solutions within the budget of $25(d + 1)$ function evaluations. 
We note that the relative preference for using \texttt{PyBOBYQA} decreases as the noise increases. 
This was to be expected, since as this paper further demonstrates, the quality of a quadratic interpolation models become proportionally poor as $\epsilon_f$ in a noise oracle increases, and the noise mitigation technique employed by \texttt{PyBOBYQA} is only a heuristic.  

\begin{figure}
    \centering
    \includegraphics[width=.9\textwidth]{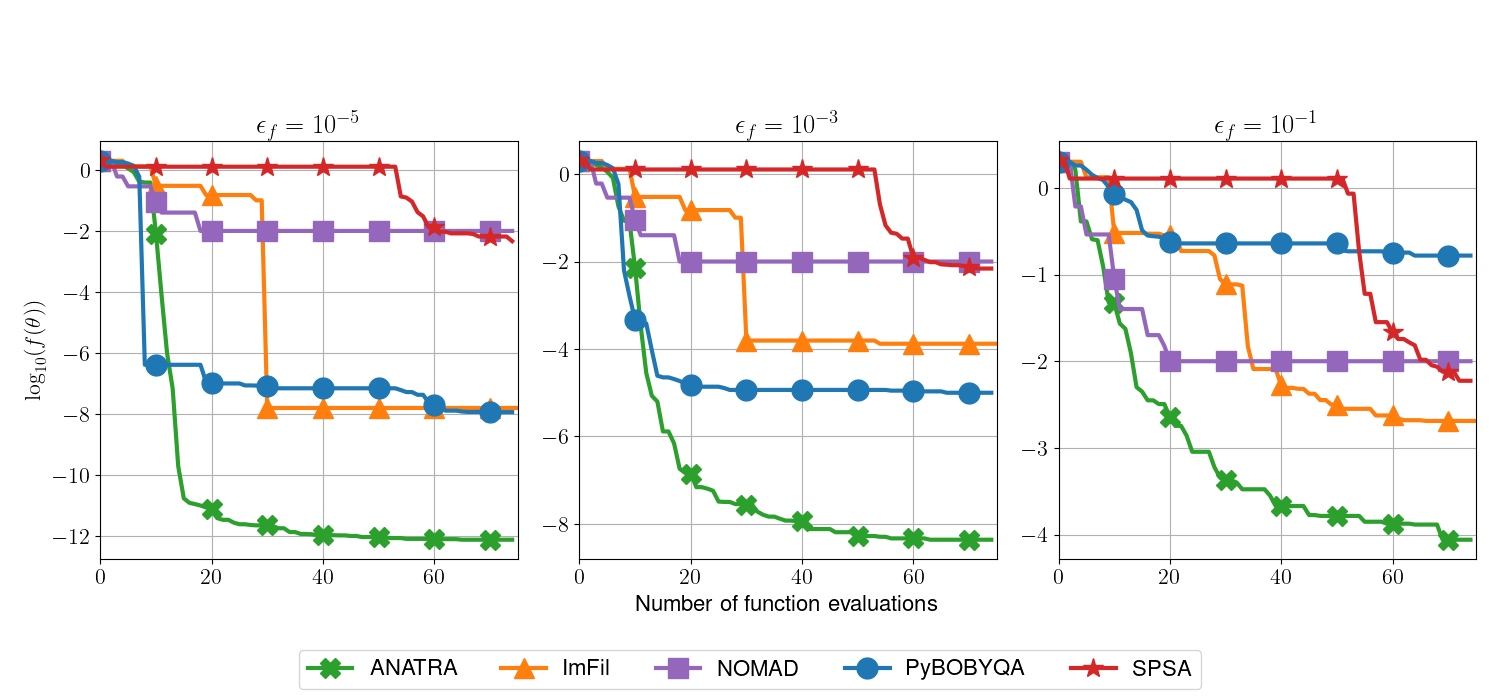}
    
        \includegraphics[width=.9\textwidth]{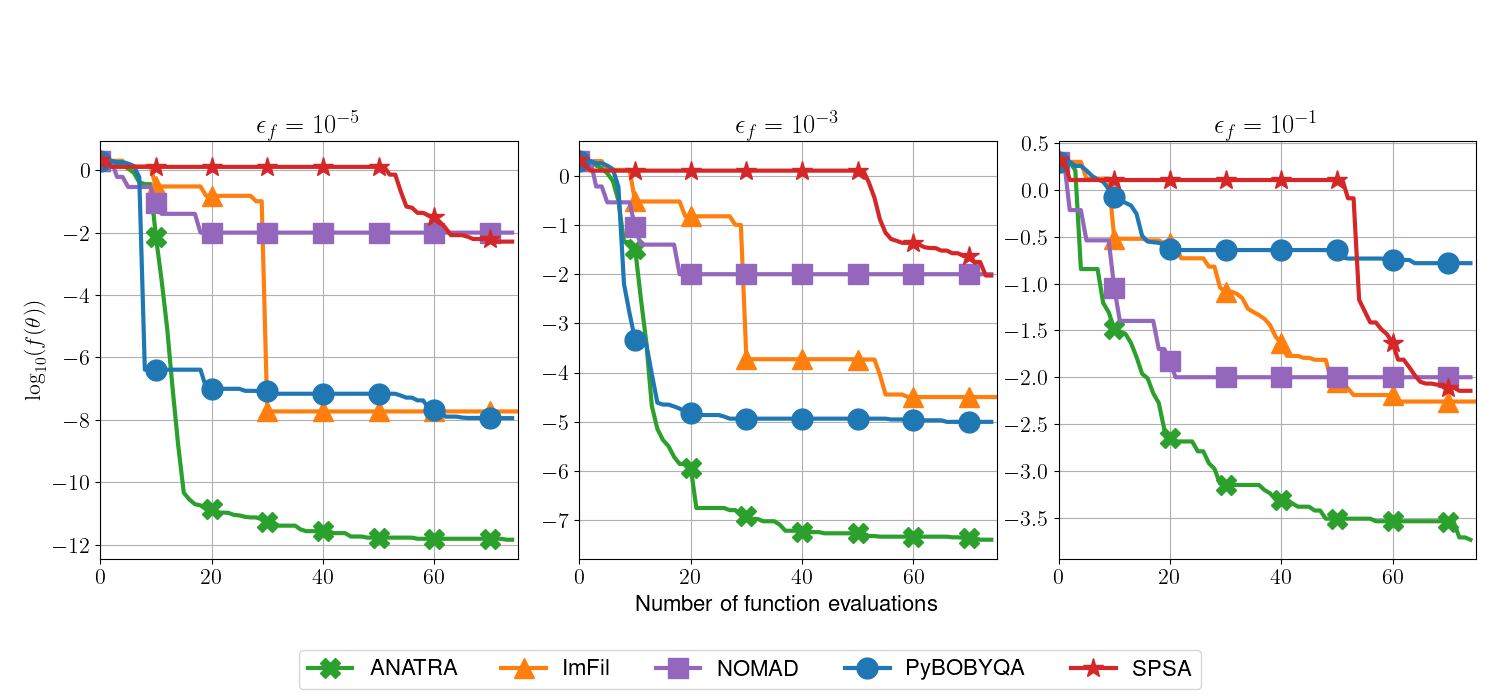}

        \caption{\label{fig:2dquadratic}Results for $d=2$-dimensional quadratic problems \cref{eq:synthetic_quadratic}. 
        Top row corresponds to uniform noise; bottom row corresponds to Gaussian noise. 
        Throughout all of these plots, the solid lines correspond to the median ground truth objective value over 30 runs of the best point evaluated by the solver.  }
\end{figure}

\begin{figure}
    \centering
 \includegraphics[width=.9\textwidth]{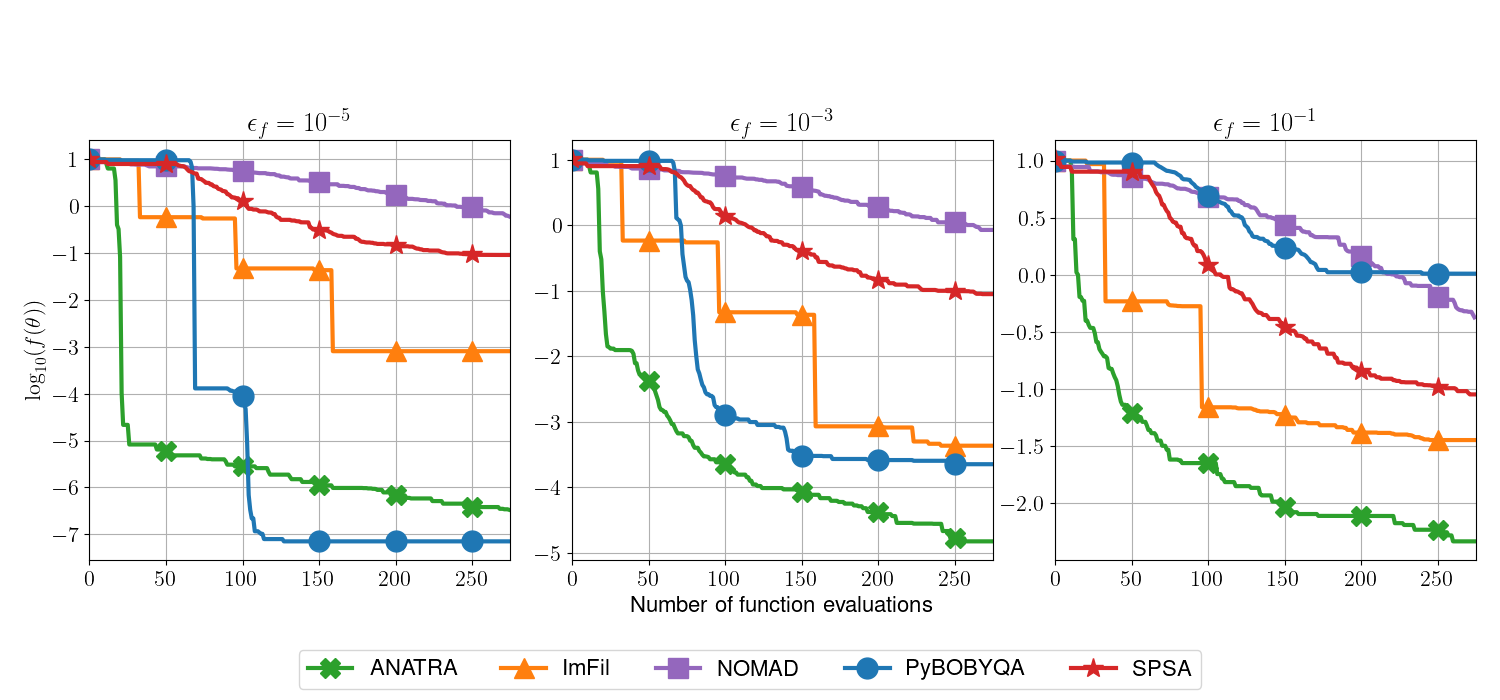}
 
        \includegraphics[width=.9\textwidth]{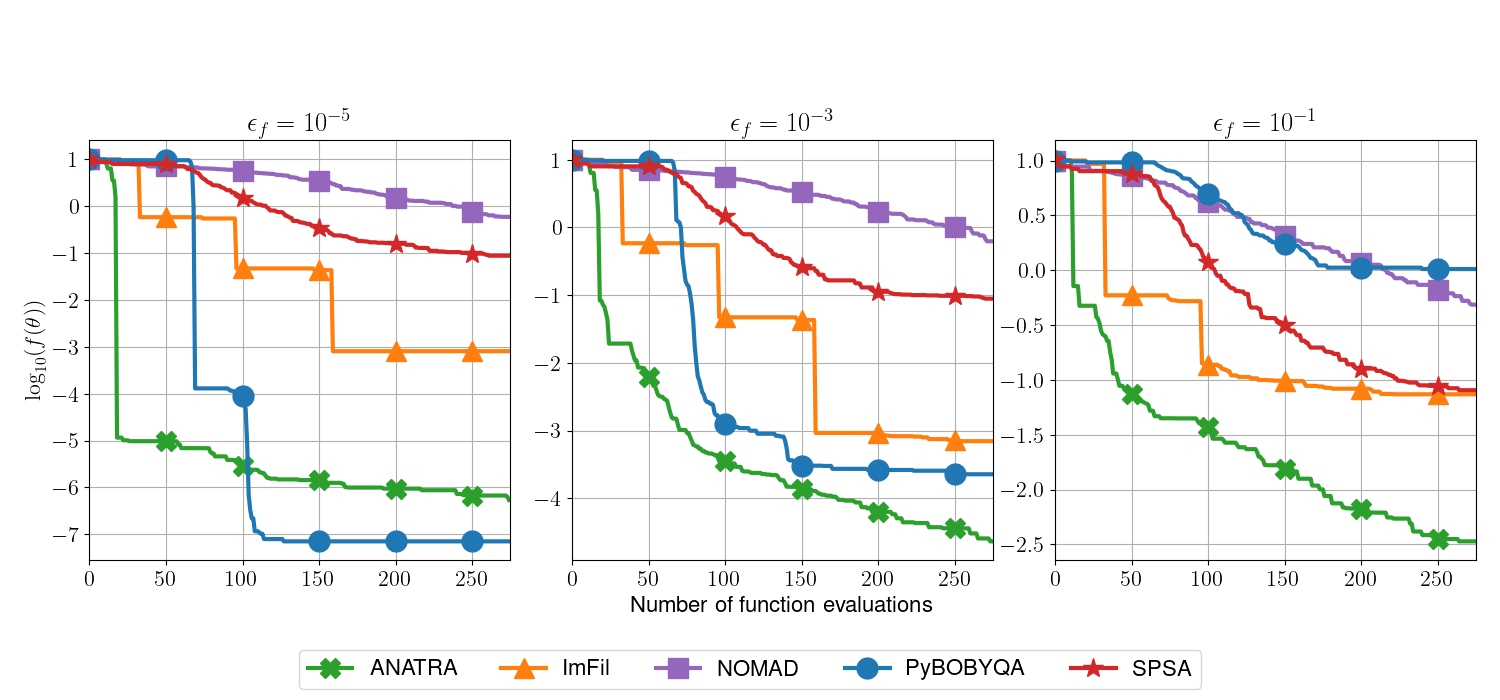}

         \caption{\label{fig:10dquadratic}Same as \Cref{fig:2dquadratic} but with $d=10$-dimensional quadratic functions.}
\end{figure}       

A second synthetic problem that we find important for comparing derivative-free optimization methods is the standard benchmark 2-dimensional Rosenbrock function perturbed by additive noise.
That is, 
\begin{equation}
    \label{eq:synthetic_rosenbrock}
   \tilde f(\theta,\xi) = 100(\theta_2 - \theta_1^2)^2 + (1-\theta_1)^2 + \xi.
\end{equation}
The Rosenbrock function is especially good for testing the efficacy of \emph{model-based} derivative-free methods because the Rosenbrock function is a quartic polynomial, meaning that a quadratic interpolation model will generally never be a perfect model. 
Moreover, the Rosenbrock function is highly nonlinear but interpretably so; any descent-seeking trajectory must turn around a curved valley, the base of which is defined by the curve $\theta_2 = \theta_1^2$. 
However, even as a pathologically nonlinear and nonconvex function, the 2-dimensional Rosenbrock has exactly one local (global) minimum, making benchmarking in terms of function values straightforward. 

Results comparing the median performance over 30 trials of the five solvers are displayed in \Cref{fig:rosenbrock}. 
In these runs we used the starting point of the origin, which is conveniently on the curve $\theta_2= \theta_1^2$; that is, this test is designed not to test an algorithm's ability to find the bottom of the valley but instead  to test an algorithm's ability to follow the nonlinear valley to the global minimum. 
We note a few behaviors that did not appear in the test with the simple quadratic functions. 
First, one may note that \texttt{SPSA} does not seem to start from the same starting point as the other solvers; the reason that  because of the finite differencing scheme innate to \texttt{SPSA}, the initial point is never actually evaluated. 
Moreover, because the gradient is relatively small near $\theta_2=\theta_1^2$ but relatively large farther away from the same curve, the gradient estimates obtained from the randomized two-point difference scheme employed by \texttt{SPSA} lead to a trajectory that tends to stay near the valley but never gets too close to the bottom until/unless a very small step size is employed. 
Of the remaining solvers, it is notable that a relative preference for \texttt{NOMAD} versus \texttt{ImFil} seems to have switched for this problem. 
A potential explanation for this may lie in the fact that \texttt{ImFil} uses a fairly rigid (coordinate-aligned) finite difference gradient estimator; this stencil centered near any point on $\theta_2=\theta_1^2$ will not overlap well with the valley of descent. 
This will trigger multiple stencil failures until the stencil size is quite small, at which point finding descent is difficult in the presence of noisy evaluations. 
\texttt{NOMAD}, on the other hand, generates polling directions less rigidly on the mesh and is more likely to identify an improving point. 
Of particular interest to this paper, \texttt{ANATRA} and \texttt{PyBOBYQA} perform similarly on the lowest level of noise ($\epsilon_f=10^{-5}$), but the preference for using \texttt{ANATRA} becomes increasingly clear as the noise increases.

\begin{figure}
    \centering
    \includegraphics[width=.9\textwidth]{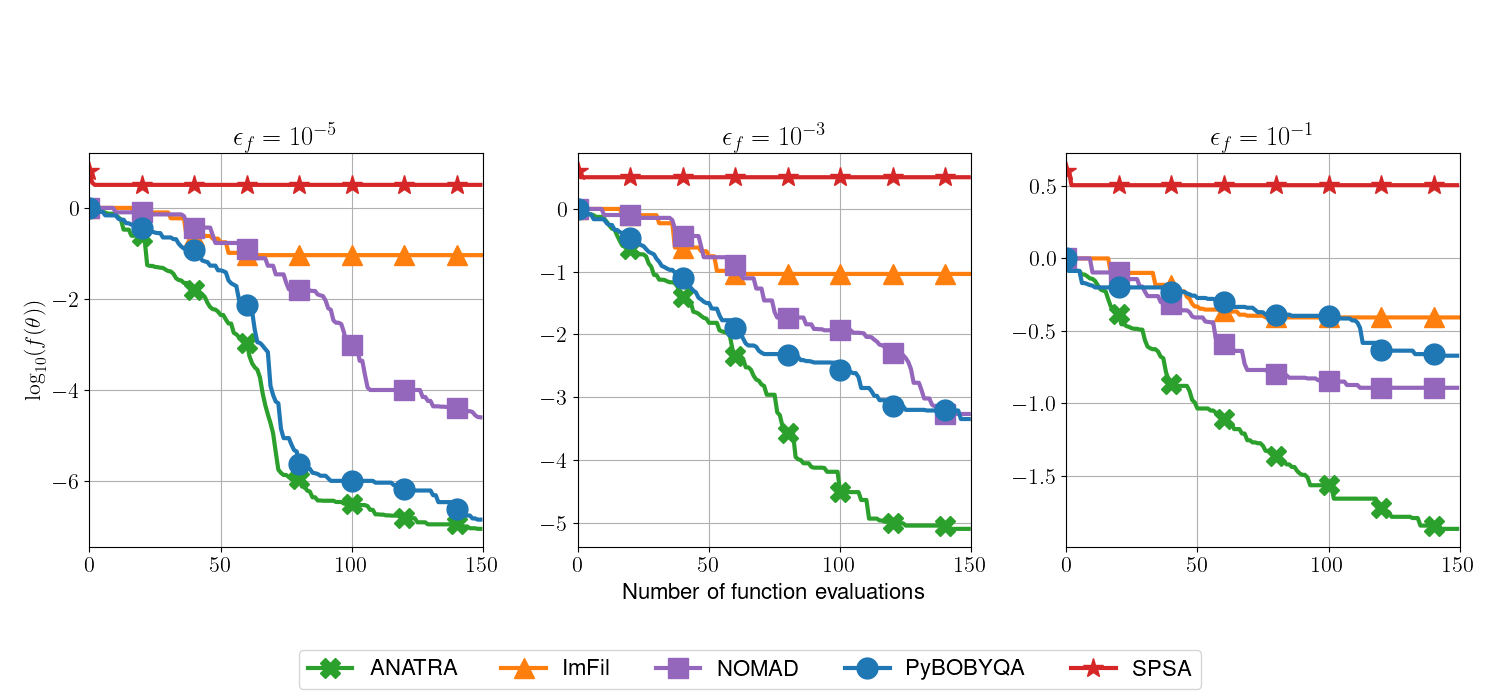}
    
        \includegraphics[width=.9\textwidth]{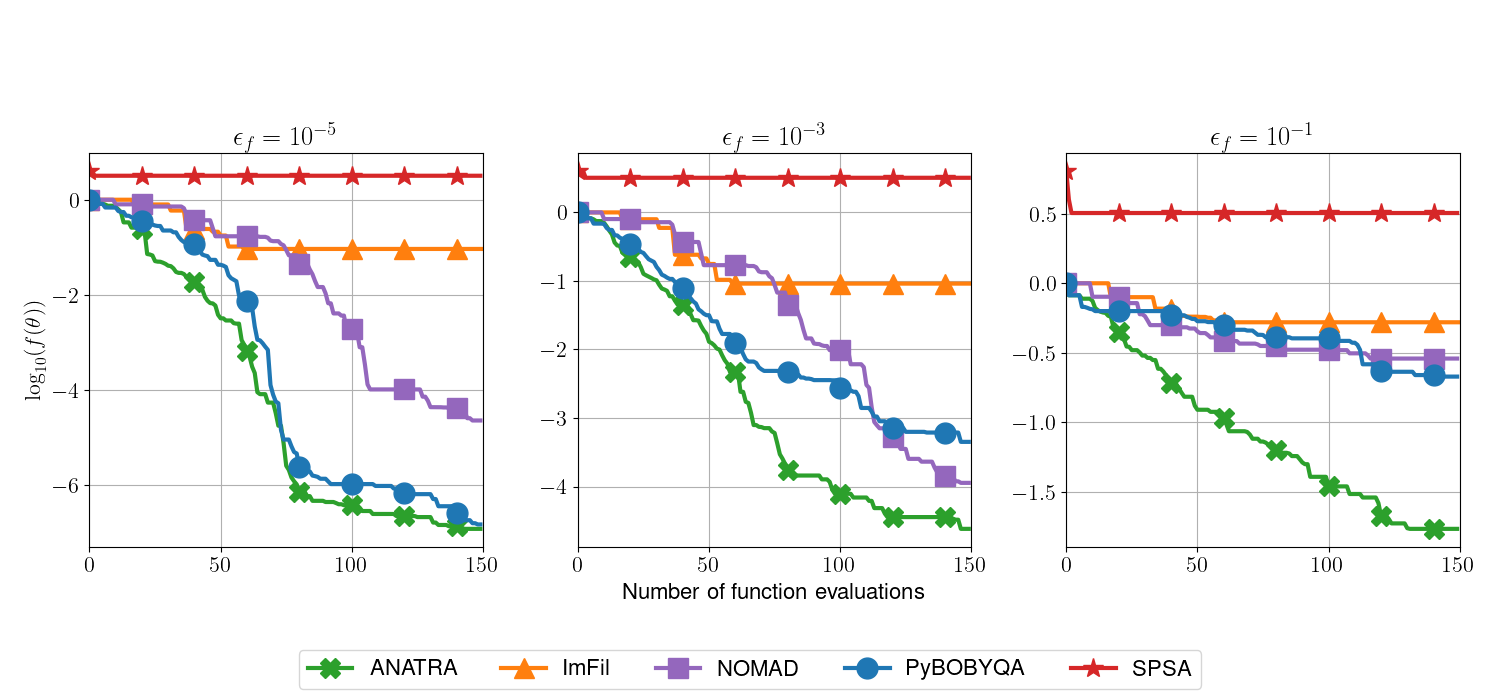}
        \caption{\label{fig:rosenbrock} Comparing solvers on \cref{eq:synthetic_rosenbrock}. Top plots correspond to uniform noise; bottom plots correspond to Gaussian noise. }
\end{figure}

\subsection{Tests on VQA Problems}
The synthetic tests of the preceding section were designed to establish why we believe a noise-aware model-based method like \texttt{ANATRA} is a good choice for noise-perturbed smooth optimization. 
To further this claim,
we now perform tests on standard QAOA benchmarks to demonstrate the performance of our solver on simulations of real problems, which is the original motivation for our work.  
We simulate QAOA MaxCut circuits in Qiskit~\citep{Qiskit} with a depth of five, resulting in a set of (ten) parameters. 
Of course, to mirror the real-world, we no longer assume that we know $\epsilon_f$ as an input to \texttt{ANATRA}.
Instead, when we compute the sample average of the MaxCut objective values suggested by the quantum device, we additionally compute the standard error; we use the standard error as  $\tilde\epsilon_f$.  
In our tests we
vary the shot counts per function evaluation to be in $\{50, 100, 500, 1000\}$. 

We experiment with the MaxCut problem  both on a toy graph 
with MaxCut value of 6 and 
on the Chv\'{a}tal graph, a standard benchmark that has a MaxCut value of 20. 
\replace{}{In our first set of QAOA experiments, illustrated in \Cref{fig:qaoa}, we employed the QASM simulator in Qiskit to simulate an ideal execution of the QAOA circuit.}

\begin{figure}
    \centering
    \includegraphics[width=.9\textwidth]{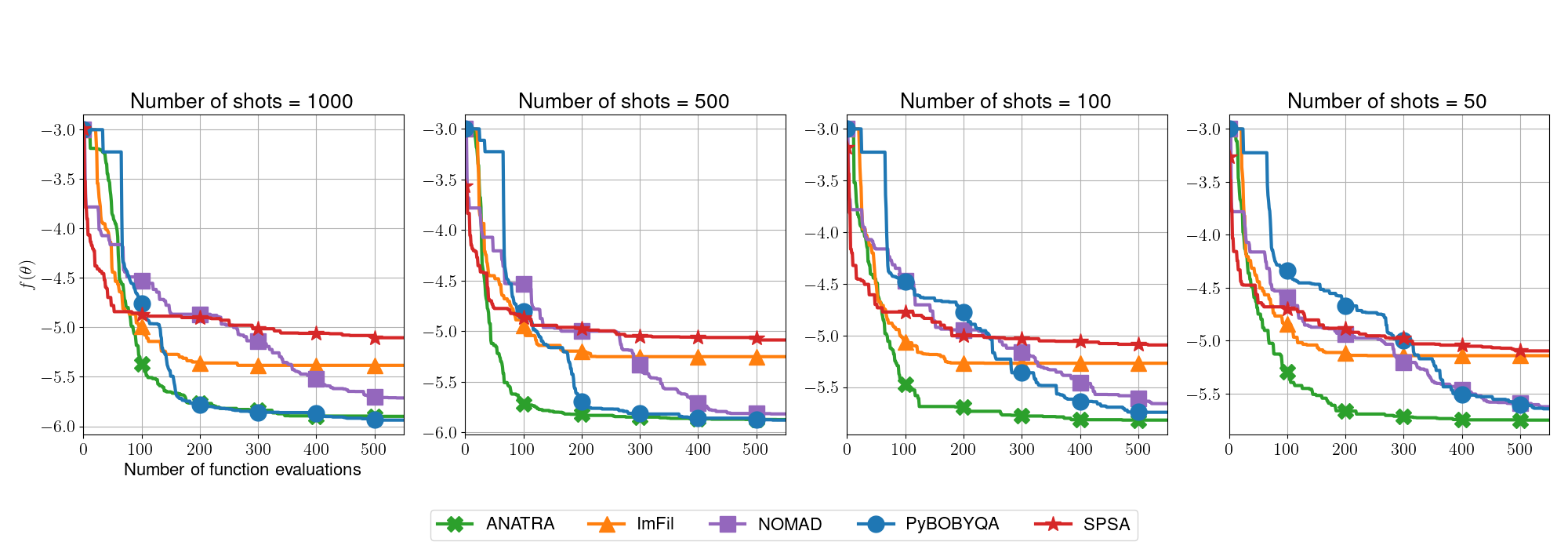}
    
        \includegraphics[width=.9\textwidth]{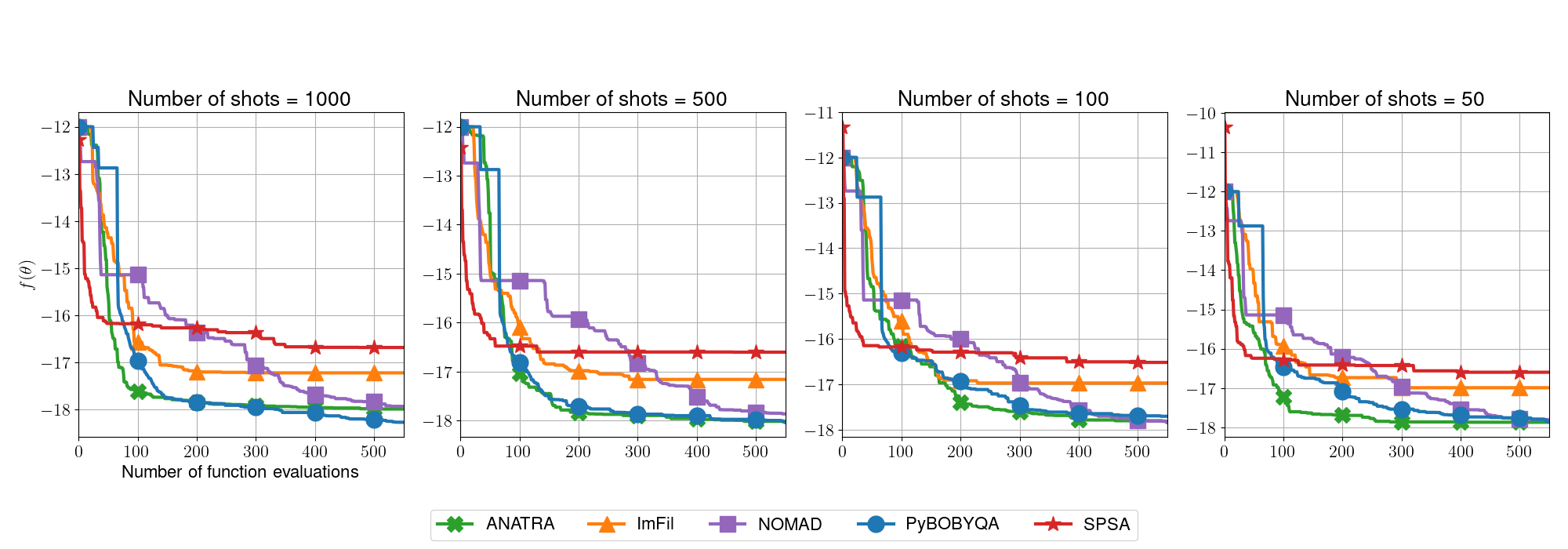}
        \caption{\label{fig:qaoa} Comparing solvers on QAOA MaxCut problems. Top plots correspond to the toy graph, and bottom plots correspond to the Chv\'{a}tal graph.}
\end{figure}

The results in \Cref{fig:qaoa} mirror most of our expectations that came from the synthetic tests. In less noisy settings (when the shot count is 1000 shots per evaluation), there is little distinction, but perhaps a slight preference, for using \texttt{PyBOBYQA} over \texttt{ANATRA}. However, as the noise increases (the shot count decreases), we see an increasingly clear preference for employing \texttt{ANATRA}, both in terms of final median solution quality and in terms of efficiency to reach said median solution quality. 

\replace{}{The practical intention of \texttt{ANATRA} was to provide a noise-aware solver, where the source of the noise need not be purely stochastic, but rather real gate noise on near-term quantum devices. 
As such, we also experimented with \texttt{ANATRA} on the smaller (toy graph) QAOA MaxCut instance on IBM Algiers.
Due to time and budget constraints, we only ran this second set of experiments with the two best-performing solvers from the tests illustrated in \Cref{fig:qaoa}, \texttt{PyBOBYQA} and \texttt{ANATRA}. 
Although not entirely technically correct, it is reasonable to continue using the standard error of cost function estimates to estimate $\epsilon_f$ in the presence of gate noise. 
Developing more robust means of estimating $\epsilon_f$ to better align with \Cref{def:zoo} is a topic for future research. 
Median results are displayed in \Cref{fig:qaoa2}. 
We comment that while both \texttt{PyBOBYQA} and \texttt{ANATRA} perform worse in this noisier environment as compared to the idealized QASM simulator, the relative advantage of using \texttt{ANATRA} over \texttt{PyBOBYQA} has clearly increased in this noisier setting. 
}

\begin{figure}
    \centering
    \includegraphics[width=.9\textwidth]{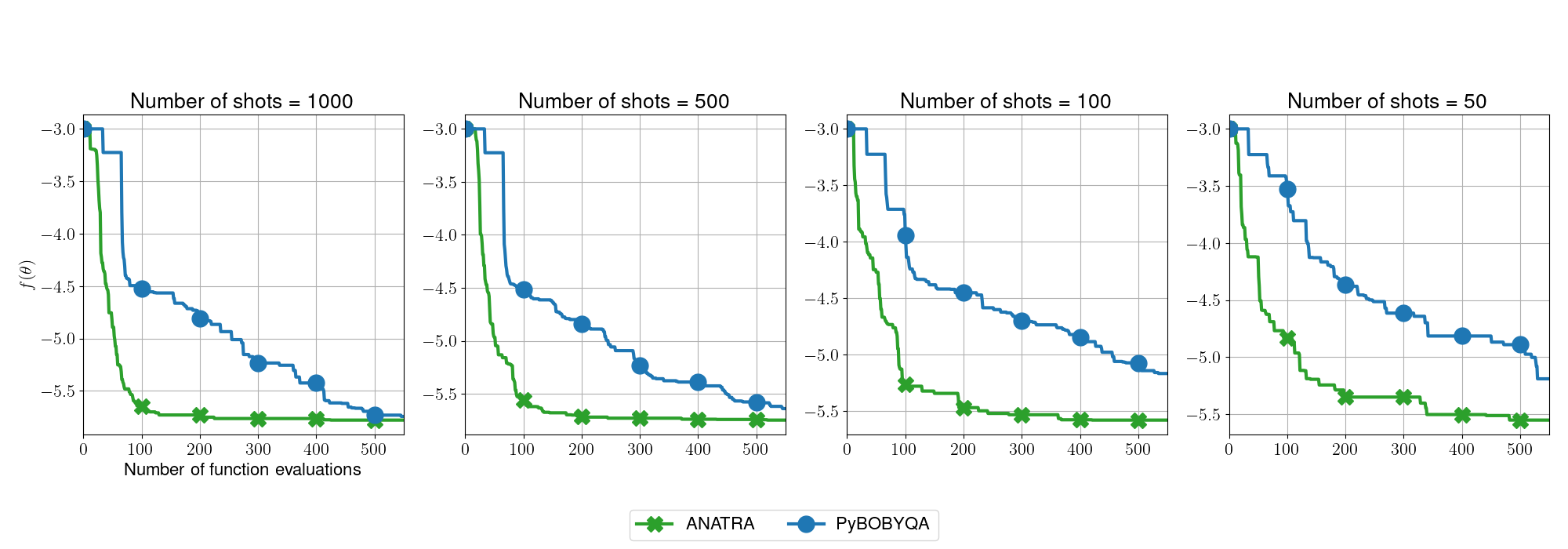}
    
    \caption{\label{fig:qaoa2} Comparing the performance of \texttt{ANATRA} and \texttt{PyBOBYQA} on the QAOA MaxCut problems with the toy graph, but on IBM Algiers, as opposed to an idealized QASM simulator. }
\end{figure}

\section{Discussion}\label{sec:discussion}

We have presented, analyzed, and tested a noise-aware model-based trust-region algorithm to solve noisy derivative-free optimization problems, a problem class that can encompass VQA.
In our theory, function evaluations are assumed to be obtained from a zeroth-order oracle with deterministically 
 bounded noise or  subexponential noise.
 Our proposed algorithm was based on an established noise-aware trust-region method but employed algorithmic devices to carefully maintain poisedness of interpolation points.
  In addition, unlike most classical model-based trust-region methods, our method decoupled the trust-region radius from the sampling radius.
 These two considerations were made in order to guarantee conditions concerning first-order oracles, required by the theory of the original noise-aware trust-region method, were satisfied. 
 Building on previous results, 
 we proved that with high probability our method exhibits a worst-case $\mathcal{O}(\epsilon^{-2})$ convergence rate to an $\epsilon$-neighborhood of a local minima provided $\epsilon$ is greater than a function of the noise level $\epsilon_f$. 
 Numerical experiments demonstrate that our proposed algorithm outperforms alternative solvers, particularly in highly noisy regimes, such as when shot counts on a quantum device are low. 

The work in this manuscript leaves open several avenues for future development. 
As mentioned, the techniques proposed in \citet{Powell2003} could alleviate the considerable per-iteration linear algebra overhead incurred by \texttt{ANATRA}.
Negotiating between the theory and practice of these model update procedures involves nontrivial research effort. 
We are also interested in extending \texttt{ANATRA} with adaptive sampling techniques appropriate for stochastic optimization, such as those employed in \texttt{ASTRO-DF} \citep{Shashaani2018}. 
While the assumptions made in \Cref{def:zoo} did not require oracles to be an unbiased estimator of a ground truth function, there is a significant stochastic component to each noisy function evaluation done on a quantum computer.
As long as the hardware error does not dominate the stochastic error, stochastic optimization (and hence adaptive sampling) may be appropriate. 
Opportunities exist for developing techniques to distinguish between stochastic and hardware noise, so that adaptive sampling may be effectively and judiciously performed in the VQA setting; a step towards such judicious detection can be found in \cite{menickelly2024estimating}.
The application of \texttt{ANATRA} to other noise settings is also of interest. 
For example, \Cref{def:zoo} established a \emph{global} property for oracles in the sense that $\epsilon_f$ was a constant applicable to all of $\Reals^d$. 
However, as can be seen in \Cref{line:get_noise_estimate} of \Cref{alg.DFOTR.noise}, \texttt{ANATRA} was designed assuming $\epsilon_f$ is in fact a \emph{local} constant,  intended to be relevant only on a trust region in each iteration. 
In problem settings where noise is known to be nonconstant with respect to problem parameters, such as many VQA settings (see, e.g., \citet{zhang2022variational} and references therein), a potential extension of \texttt{ANATRA} might attempt to model nonconstant noise.
This can be done, for instance, by constructing interpolation/regression models of $\epsilon_f$ and employing the resulting noise model not only to perform \Cref{line:get_noise_estimate} but also to modify routines for selecting $\cX$ to decrease the error of noisy model gradients.

\section*{Acknowledgments}
This material is based upon work supported by the U.S.
Department of Energy, Office of Science, National Quantum Information Science
Research Centers and the Office of Advanced Scientific Computing Research,
Accelerated Research for Quantum Computing program under contract number
DE-AC02-06CH11357.

\begin{appendices}
    \section{Algorithm for generating affinely independent points}
    Here we present an algorithm that is used in \Cref{line:affinepoints} of \Cref{alg.DFOTR.noise}. 
    This algorithm is based on \cite{wild2008mnh}[Algorithm 4.1]. 
    \Cref{alg.affpoints} begins by computing the set of displacements of each point in $\cX$ from the center point, $x^0=\theta_k$,
    and initializes an empty set of points $\mathcal{Y}$ and a trivial subspace $Z=\Reals^d$. 
    One displacement at a time, the algorithm checks whether the projection onto $Z$ (denoted $\textbf{Proj}_Z$) of a scaled displacement is sufficiently bounded away from zero (that is, the method checks whether the displacement is not sufficiently close to being orthogonal to $Z$). 
    If the projection is sufficiently large, then the displacement is added to the set $\mathcal{Y}$, and the subspace $Z$ is updated to be the null space to the span of the displacement vectors in $\mathcal{Y}$, denoted $\textbf{Null}(\textbf{Span}(\mathcal{Y}))$. 
    After looping over all $p$ points, \Cref{alg.affpoints} returns the union of $x^0$, the $x^i\in\cX$ such that $d^i$ was added to $\mathcal{Y}$, and the set $\{x^0 + \textbf{Basis}(Z)\}$, where $\textbf{Basis}(Z)$ denotes an arbitrary basis for $Z$. 
    In our implementation, and as intended in \cite{wild2008mnh}, all of these projections and null space operations are handled via a QR factorization with insertions, and the final basis for $Z$ is taken as appropriate columns of the orthogonal $Q$ factor. 
    When we call \Cref{alg.affpoints} from \Cref{alg.DFOTR.noise}, the choice of $\cX, c_s,$ and $\Delta_k$ is transparent. We set $\tau = 10^{-5}$. 
    
    \begin{algorithm2e}
         \SetAlgoNlRelativeSize{-4}
    \KwIn{Set of points $\cX = \{x^0, x^1, \dots, x^p\}\subset\Reals^d$, constants $c_s\geq 1$, $\tau\in(0,1/c_s]$, trust region $\Delta_k$.}
    Set $\mathcal{D} := \{d^i = x^i - x^0: i=1,\dots,p\}$.\\
    Initialize $\mathcal{Y} = \emptyset$, $Z = \Reals^d$.\\
    \For{$i=1,\dots,p$}{    
         \If{$\left|\textbf{Proj}_{Z}\left(\frac{1}{c_s\Delta_k}d^i \right)\right| \geq \tau$}{
         $\mathcal{Y} \gets \mathcal{Y}\cup\{d^i\}$\\
         $Z\gets \textbf{Null}(\textbf{Span}(\mathcal{Y}))$\\
         }
    }
    $\cX \gets \{x^0\} \cup \{x^i: d^i\in \mathcal{Y}\} \cup \{x^0 + \textbf{Basis}(Z)\}$.
    \caption{Generating a set of affinely independent points \label{alg.affpoints}}
    \end{algorithm2e}
\end{appendices}

\bibliographystyle{unsrtnat}
\bibliography{vqa}

\begin{thebibliography}{53}
\providecommand{\natexlab}[1]{#1}
\providecommand{\url}[1]{\texttt{#1}}
\expandafter\ifx\csname urlstyle\endcsname\relax
  \providecommand{\doi}[1]{doi: #1}\else
  \providecommand{\doi}{doi: \begingroup \urlstyle{rm}\Url}\fi

\bibitem[Cerezo et~al.(2021)Cerezo, Arrasmith, Babbush, Benjamin, Endo, Fujii,
  McClean, Mitarai, Yuan, Cincio, and Coles]{Cerezo2021}
M.~Cerezo, Andrew Arrasmith, Ryan Babbush, Simon~C. Benjamin, Suguru Endo,
  Keisuke Fujii, Jarrod~R. McClean, Kosuke Mitarai, Xiao Yuan, Lukasz Cincio,
  and Patrick~J. Coles.
\newblock Variational quantum algorithms.
\newblock \emph{Nature Reviews Physics}, 3\penalty0 (9):\penalty0 625--644,
  August 2021.
\newblock \doi{10.1038/s42254-021-00348-9}.

\bibitem[McClean et~al.(2016)McClean, Romero, Babbush, and
  Aspuru-Guzik]{McClean2016}
Jarrod~R McClean, Jonathan Romero, Ryan Babbush, and Alán Aspuru-Guzik.
\newblock The theory of variational hybrid quantum-classical algorithms.
\newblock \emph{New Journal of Physics}, 18\penalty0 (2):\penalty0 023023,
  February 2016.
\newblock \doi{10.1088/1367-2630/18/2/023023}.

\bibitem[Farhi et~al.(2014)Farhi, Goldstone, and Gutmann]{Farhi2014}
Edward Farhi, Jeffrey Goldstone, and Sam Gutmann.
\newblock A quantum approximate optimization algorithm.
\newblock \emph{arXiv:1411.4028}, 2014.
\newblock \doi{10.48550/arXiv.1411.4028}.

\bibitem[Peruzzo et~al.(2014)Peruzzo, McClean, Shadbolt, Yung, Zhou, Love,
  Aspuru-Guzik, and O’Brien]{Peruzzo2014}
Alberto Peruzzo, Jarrod McClean, Peter Shadbolt, Man-Hong Yung, Xiao-Qi Zhou,
  Peter~J. Love, Alán Aspuru-Guzik, and Jeremy~L. O’Brien.
\newblock A variational eigenvalue solver on a photonic quantum processor.
\newblock \emph{Nature Communications}, 5\penalty0 (1), July 2014.
\newblock ISSN 2041-1723.
\newblock \doi{10.1038/ncomms5213}.

\bibitem[Wang et~al.(2018)Wang, Hadfield, Jiang, and Rieffel]{Wang2018}
Zhihui Wang, Stuart Hadfield, Zhang Jiang, and Eleanor~G. Rieffel.
\newblock Quantum approximate optimization algorithm for {MaxCut: A} fermionic
  view.
\newblock \emph{Physical Review A}, 97\penalty0 (2), February 2018.
\newblock \doi{10.1103/physreva.97.022304}.

\bibitem[Shaydulin et~al.(2023{\natexlab{a}})Shaydulin, Lotshaw, Larson,
  Ostrowski, and Humble]{Shaydulin2022}
Ruslan Shaydulin, Phillip~C. Lotshaw, Jeffrey Larson, James Ostrowski, and
  Travis~S. Humble.
\newblock Parameter transfer for quantum approximate optimization of weighted
  {MaxCut}.
\newblock \emph{ACM Transactions on Quantum Computing}, 4\penalty0
  (3):\penalty0 1--15, 2023{\natexlab{a}}.
\newblock \doi{10.1145/3584706}.

\bibitem[Shaydulin et~al.(2023{\natexlab{b}})Shaydulin, Li, Chakrabarti,
  Herman, Kumar, Larson, Lykov, Minssen, Sun, Alexeev, DeCross, Dreiling,
  Gaebler, Gatterman, Gerber, Gilmore, Gresh, Hewitt, Horst, Hu, Johansen,
  Matheny, Mengle, Mills, Moses, Neyenhuis, Siegfried, Yalovetzky, and
  Pistoia]{Shaydulin2023}
Ruslan Shaydulin, Changhao Li, Shouvanik Chakrabarti, Dylan Herman, Niraj
  Kumar, Jeffrey Larson, Danylo Lykov, Pierre Minssen, Yue Sun, Yuri Alexeev,
  Matthew DeCross, Joan~M. Dreiling, John~P. Gaebler, Thomas~M. Gatterman,
  Justin~A. Gerber, Kevin Gilmore, Dan Gresh, Nathan Hewitt, Chandler~V. Horst,
  Shaohan Hu, Jacob Johansen, Mitchell Matheny, Tanner Mengle, Michael Mills,
  Steven~A. Moses, Brian Neyenhuis, Peter Siegfried, Romina Yalovetzky, and
  Marco Pistoia.
\newblock Evidence of scaling advantage for the quantum approximate
  optimization algorithm on a classically intractable problem.
\newblock \emph{arXiv:2308.02342}, 2023{\natexlab{b}}.
\newblock \doi{10.48550/arXiv.2308.02342}.

\bibitem[Kandala et~al.(2017)Kandala, Mezzacapo, Temme, Takita, Brink, Chow,
  and Gambetta]{Kandala2017}
Abhinav Kandala, Antonio Mezzacapo, Kristan Temme, Maika Takita, Markus Brink,
  Jerry~M. Chow, and Jay~M. Gambetta.
\newblock Hardware-efficient variational quantum eigensolver for small
  molecules and quantum magnets.
\newblock \emph{Nature}, 549\penalty0 (7671):\penalty0 242--246, September
  2017.
\newblock \doi{10.1038/nature23879}.

\bibitem[Grimsley et~al.(2019)Grimsley, Economou, Barnes, and
  Mayhall]{Grimsley2019}
Harper~R. Grimsley, Sophia~E. Economou, Edwin Barnes, and Nicholas~J. Mayhall.
\newblock An adaptive variational algorithm for exact molecular simulations on
  a quantum computer.
\newblock \emph{Nature Communications}, 10\penalty0 (1), July 2019.
\newblock \doi{10.1038/s41467-019-10988-2}.

\bibitem[McCaskey et~al.(2019)McCaskey, Parks, Jakowski, Moore, Morris, Humble,
  and Pooser]{mccaskey2019quantum}
Alexander~J McCaskey, Zachary~P Parks, Jacek Jakowski, Shirley~V Moore, Titus~D
  Morris, Travis~S Humble, and Raphael~C Pooser.
\newblock Quantum chemistry as a benchmark for near-term quantum computers.
\newblock \emph{npj Quantum Information}, 5\penalty0 (1):\penalty0 99, 2019.
\newblock \doi{10.1038/s41534-019-0209-0}.

\bibitem[Yeter-Aydeniz et~al.(2021)Yeter-Aydeniz, Gard, Jakowski, Majumder,
  Barron, Siopsis, Humble, and Pooser]{yeter2021benchmarking}
K{\"u}bra Yeter-Aydeniz, Bryan~T Gard, Jacek Jakowski, Swarnadeep Majumder,
  George~S Barron, George Siopsis, Travis~S Humble, and Raphael~C Pooser.
\newblock Benchmarking quantum chemistry computations with variational,
  imaginary time evolution, and {Krylov} space solver algorithms.
\newblock \emph{Advanced Quantum Technologies}, 4\penalty0 (7):\penalty0
  2100012, 2021.
\newblock \doi{10.1002/qute.202100012}.

\bibitem[Bauer et~al.(2020)Bauer, Bravyi, Motta, and Chan]{Bauer2020}
Bela Bauer, Sergey Bravyi, Mario Motta, and Garnet Kin-Lic Chan.
\newblock Quantum algorithms for quantum chemistry and quantum materials
  science.
\newblock \emph{Chemical Reviews}, 120\penalty0 (22):\penalty0 12685--12717,
  October 2020.
\newblock \doi{10.1021/acs.chemrev.9b00829}.

\bibitem[Schuld et~al.(2019)Schuld, Bergholm, Gogolin, Izaac, and
  Killoran]{Schuld2019}
Maria Schuld, Ville Bergholm, Christian Gogolin, Josh Izaac, and Nathan
  Killoran.
\newblock Evaluating analytic gradients on quantum hardware.
\newblock \emph{Physical Review A}, 99\penalty0 (3), March 2019.
\newblock \doi{10.1103/physreva.99.032331}.

\bibitem[Menickelly et~al.(2023{\natexlab{a}})Menickelly, Ha, and
  Otten]{menickelly2023latency}
Matt Menickelly, Yunsoo Ha, and Matthew Otten.
\newblock Latency considerations for stochastic optimizers in variational
  quantum algorithms.
\newblock \emph{Quantum}, 7:\penalty0 949, 2023{\natexlab{a}}.
\newblock \doi{10.22331/q-2023-03-16-949}.

\bibitem[Menickelly et~al.(2023{\natexlab{b}})Menickelly, Wild, and
  Xie]{menickelly2023stochastic}
Matt Menickelly, Stefan~M Wild, and Miaolan Xie.
\newblock A stochastic quasi-{Newton} method in the absence of common random
  numbers.
\newblock \emph{arXiv:2302.09128}, 2023{\natexlab{b}}.
\newblock \doi{10.48550/arXiv.2302.09128}.

\bibitem[K{\"u}bler et~al.(2020)K{\"u}bler, Arrasmith, Cincio, and
  Coles]{kubler2020adaptive}
Jonas~M K{\"u}bler, Andrew Arrasmith, Lukasz Cincio, and Patrick~J Coles.
\newblock An adaptive optimizer for measurement-frugal variational algorithms.
\newblock \emph{Quantum}, 4:\penalty0 263, 2020.
\newblock \doi{10.22331/q-2020-05-11-263}.

\bibitem[Arrasmith et~al.(2020)Arrasmith, Cincio, Somma, and
  Coles]{arrasmith2020operator}
Andrew Arrasmith, Lukasz Cincio, Rolando~D Somma, and Patrick~J Coles.
\newblock Operator sampling for shot-frugal optimization in variational
  algorithms.
\newblock \emph{arXiv:2004.06252}, 2020.
\newblock \doi{10.48550/arXiv.2004.06252}.

\bibitem[Gu et~al.(2021)Gu, Lowe, Dub, Coles, and Arrasmith]{gu2021adaptive}
Andi Gu, Angus Lowe, Pavel~A Dub, Patrick~J Coles, and Andrew Arrasmith.
\newblock Adaptive shot allocation for fast convergence in variational quantum
  algorithms.
\newblock \emph{arXiv:2108.10434}, 2021.
\newblock \doi{10.48550/arXiv.2108.10434}.

\bibitem[Ito(2023)]{ito2023latency}
Kosuke Ito.
\newblock Latency-aware adaptive shot allocation for run-time efficient
  variational quantum algorithms.
\newblock \emph{arXiv:2302.04422}, 2023.
\newblock \doi{10.48550/arXiv.2302.04422}.

\bibitem[Harrow and Napp(2021)]{Harrow2021}
Aram~W. Harrow and John~C. Napp.
\newblock Low-depth gradient measurements can improve convergence in
  variational hybrid quantum-classical algorithms.
\newblock \emph{Physical Review Letters}, 126\penalty0 (14), April 2021.
\newblock ISSN 1079-7114.
\newblock \doi{10.1103/physrevlett.126.140502}.
\newblock URL \url{http://dx.doi.org/10.1103/PhysRevLett.126.140502}.

\bibitem[Shaydulin et~al.(2019)Shaydulin, Safro, and Larson]{Shaydulin2019}
Ruslan Shaydulin, Ilya Safro, and Jeffrey Larson.
\newblock Multistart methods for quantum approximate optimization.
\newblock In \emph{Proceedings of the IEEE High Performance Extreme Computing
  Conference}, 2019.
\newblock \doi{10.1109/hpec.2019.8916288}.

\bibitem[Larson et~al.(2019)Larson, Menickelly, and Wild]{LMW2019}
Jeffrey Larson, Matt Menickelly, and Stefan~M. Wild.
\newblock Derivative-free optimization methods.
\newblock \emph{Acta Numerica}, 28:\penalty0 287--404, 2019.
\newblock \doi{10.1017/s0962492919000060}.

\bibitem[Lavrijsen et~al.(2020)Lavrijsen, Tudor, M{\"u}ller, Iancu, and
  De~Jong]{lavrijsen2020classical}
Wim Lavrijsen, Ana Tudor, Juliane M{\"u}ller, Costin Iancu, and Wibe De~Jong.
\newblock Classical optimizers for noisy intermediate-scale quantum devices.
\newblock In \emph{IEEE International Conference on Quantum Computing and
  Engineering}, pages 267--277. IEEE, 2020.
\newblock \doi{10.1109/QCE49297.2020.00041}.

\bibitem[Conn et~al.(2009)Conn, Scheinberg, and Vicente]{conn2009introduction}
Andrew~R. Conn, Katya Scheinberg, and Lu\'is~N. Vicente.
\newblock \emph{Introduction to Derivative-Free Optimization}.
\newblock SIAM, 2009.
\newblock \doi{10.1137/1.9780898718768}.

\bibitem[Powell(2009)]{powell2009bobyqa}
Michael J.~D. Powell.
\newblock The {BOBYQA} algorithm for bound constrained optimization without
  derivatives.
\newblock Technical Report DAMTP 2009/NA06, University of Cambridge, 2009.
\newblock URL \url{http://www.damtp.cam.ac.uk/user/na/NA_papers/NA2009_06.pdf}.

\bibitem[Mor{\'e} and Wild(2011)]{more2011estimating}
Jorge~J Mor{\'e} and Stefan~M Wild.
\newblock Estimating computational noise.
\newblock \emph{SIAM Journal on Scientific Computing}, 33\penalty0
  (3):\penalty0 1292--1314, 2011.
\newblock \doi{10.1137/100786125}.

\bibitem[Chen et~al.(2018)Chen, Menickelly, and Scheinberg]{stormoriginal}
Ruobing Chen, Matt Menickelly, and Katya Scheinberg.
\newblock Stochastic optimization using a trust-region method and random
  models.
\newblock \emph{Mathematical Programming}, 169\penalty0 (2):\penalty0 447--487,
  2018.
\newblock \doi{10.1007/s10107-017-1141-8}.

\bibitem[Blanchet et~al.(2019)Blanchet, Cartis, Menickelly, and
  Scheinberg]{stormrate}
Jose Blanchet, Coralia Cartis, Matt Menickelly, and Katya Scheinberg.
\newblock Convergence rate analysis of a stochastic trust-region method via
  supermartingales.
\newblock \emph{INFORMS Journal on Optimization}, 1\penalty0 (2):\penalty0
  92--119, 2019.
\newblock \doi{10.1287/ijoo.2019.0016}.

\bibitem[Larson and Billups(2016)]{Larson2016}
Jeffrey Larson and Stephen~C. Billups.
\newblock Stochastic derivative-free optimization using a trust region
  framework.
\newblock \emph{Computational Optimization and Applications}, 64\penalty0
  (3):\penalty0 619--645, 2016.
\newblock \doi{10.1007/s10589-016-9827-z}.

\bibitem[Shashaani et~al.(2018)Shashaani, Hashemi, and
  Pasupathy]{Shashaani2018}
Sara Shashaani, Fatemeh~S. Hashemi, and Raghu Pasupathy.
\newblock {ASTRO}-{DF}: {A} class of adaptive sampling trust-region algorithms
  for derivative-free stochastic optimization.
\newblock \emph{{SIAM} Journal on Optimization}, 28\penalty0 (4):\penalty0
  3145--3176, 2018.
\newblock \doi{10.1137/15m1042425}.

\bibitem[Chang et~al.(2013)Chang, Hong, and Wan]{Chang2013}
Kuo-Hao Chang, L.~Jeff Hong, and Hong Wan.
\newblock Stochastic trust-region response-surface method {(STRONG)--A} new
  response-surface framework for simulation optimization.
\newblock \emph{INFORMS Journal on Computing}, 25\penalty0 (2):\penalty0
  230--243, May 2013.
\newblock \doi{10.1287/ijoc.1120.0498}.

\bibitem[Menhorn et~al.(2022)Menhorn, Augustin, Bungartz, and
  Marzouk]{Augustin2017}
Friedrich Menhorn, Florian Augustin, Hans-Joachim Bungartz, and Youssef~M.
  Marzouk.
\newblock A trust-region method for derivative-free nonlinear constrained
  stochastic optimization.
\newblock \emph{arXiv:1703.04156}, 2022.
\newblock \doi{10.48550/arXiv.1703.04156}.

\bibitem[Sun and Nocedal(2022)]{sun2022trust}
Shigeng Sun and Jorge Nocedal.
\newblock A trust region method for the optimization of noisy functions.
\newblock \emph{arXiv:2201.00973}, 2022.
\newblock \doi{10.48550/arXiv.2201.00973}.

\bibitem[Cao et~al.(2023)Cao, Berahas, and Scheinberg]{Cao2023}
Liyuan Cao, Albert~S. Berahas, and Katya Scheinberg.
\newblock First- and second-order high probability complexity bounds for
  trust-region methods with noisy oracles.
\newblock \emph{Mathematical Programming}, July 2023.
\newblock \doi{10.1007/s10107-023-01999-5}.

\bibitem[Proctor et~al.(2020)Proctor, Revelle, Nielsen, Rudinger, Lobser,
  Maunz, Blume-Kohout, and Young]{Proctor2020}
Timothy Proctor, Melissa Revelle, Erik Nielsen, Kenneth Rudinger, Daniel
  Lobser, Peter Maunz, Robin Blume-Kohout, and Kevin Young.
\newblock Detecting and tracking drift in quantum information processors.
\newblock \emph{Nature Communications}, 11\penalty0 (1), October 2020.
\newblock \doi{10.1038/s41467-020-19074-4}.

\bibitem[Berahas et~al.(2022)Berahas, Cao, Choromanski, and
  Scheinberg]{berahas2022theoretical}
Albert~S Berahas, Liyuan Cao, Krzysztof Choromanski, and Katya Scheinberg.
\newblock A theoretical and empirical comparison of gradient approximations in
  derivative-free optimization.
\newblock \emph{Foundations of Computational Mathematics}, 22\penalty0
  (2):\penalty0 507--560, 2022.
\newblock \doi{10.1007/s10208-021-09513-z}.

\bibitem[Wild(2008)]{wild2008mnh}
Stefan~M. Wild.
\newblock {MNH: A} derivative-free optimization algorithm using minimal norm
  {Hessians}.
\newblock In \emph{Tenth Copper Mountain Conference on Iterative Methods},
  2008.
\newblock URL
  \url{http://grandmaster.colorado.edu/~copper/2008/SCWinners/Wild.pdf}.

\bibitem[Wild(2017)]{wild2017chapter}
Stefan~M Wild.
\newblock {Chapter 40: POUNDERS in TAO: Solving} derivative-free nonlinear
  least-squares problems with {POUNDERS}.
\newblock In \emph{Advances and Trends in Optimization with Engineering
  Applications}, pages 529--539. SIAM, 2017.
\newblock \doi{10.1137/1.9781611974683.ch40}.

\bibitem[Powell(2003)]{Powell2003}
M.J.D. Powell.
\newblock Least {Frobenius} norm updating of quadratic models that satisfy
  interpolation conditions.
\newblock \emph{Mathematical Programming}, 100\penalty0 (1), November 2003.
\newblock \doi{10.1007/s10107-003-0490-7}.

\bibitem[Cartis et~al.(2019)Cartis, Fiala, Marteau, and
  Roberts]{cartis2019improving}
Coralia Cartis, Jan Fiala, Benjamin Marteau, and Lindon Roberts.
\newblock Improving the flexibility and robustness of model-based
  derivative-free optimization solvers.
\newblock \emph{ACM Transactions on Mathematical Software}, 45\penalty0
  (3):\penalty0 1--41, 2019.
\newblock \doi{10.1145/3338517}.

\bibitem[Le~Digabel(2011)]{LeDigabel2011}
Sébastien Le~Digabel.
\newblock Algorithm 909: {NOMAD: Nonlinear} optimization with the {MADS}
  algorithm.
\newblock \emph{ACM Transactions on Mathematical Software}, 37\penalty0
  (4):\penalty0 1--15, February 2011.
\newblock \doi{10.1145/1916461.1916468}.

\bibitem[Kelley(2011)]{imfil}
C.~T. Kelley.
\newblock \emph{Implicit Filtering}.
\newblock Society for Industrial and Applied Mathematics, 2011.
\newblock \doi{10.1137/1.9781611971903}.

\bibitem[Huyer and Neumaier(2008)]{Huyer2008}
Waltraud Huyer and Arnold Neumaier.
\newblock {SNOBFIT -- Stable} noisy optimization by branch and fit.
\newblock \emph{ACM Transactions on Mathematical Software}, 35\penalty0
  (2):\penalty0 1--25, July 2008.
\newblock \doi{10.1145/1377612.1377613}.

\bibitem[Gilmore and Kelley(1995)]{Gilmore1995}
P.~Gilmore and C.~T. Kelley.
\newblock An implicit filtering algorithm for optimization of functions with
  many local minima.
\newblock \emph{SIAM Journal on Optimization}, 5\penalty0 (2):\penalty0
  269--285, May 1995.
\newblock \doi{10.1137/0805015}.

\bibitem[Audet and Dennis(2006)]{Audet2006}
Charles Audet and J.~E. Dennis.
\newblock Mesh adaptive direct search algorithms for constrained optimization.
\newblock \emph{SIAM Journal on Optimization}, 17\penalty0 (1):\penalty0
  188--217, January 2006.
\newblock \doi{10.1137/040603371}.

\bibitem[Audet et~al.(2021)Audet, Dzahini, Kokkolaras, and
  Le~Digabel]{audet2021stochastic}
Charles Audet, Kwassi~Joseph Dzahini, Michael Kokkolaras, and S{\'e}bastien
  Le~Digabel.
\newblock Stochastic mesh adaptive direct search for blackbox optimization
  using probabilistic estimates.
\newblock \emph{Computational Optimization and Applications}, 79\penalty0
  (1):\penalty0 1--34, 2021.
\newblock \doi{10.1007/s10589-020-00249-0}.

\bibitem[Spall(1992)]{Spall1992}
J.C. Spall.
\newblock Multivariate stochastic approximation using a simultaneous
  perturbation gradient approximation.
\newblock \emph{IEEE Transactions on Automatic Control}, 37\penalty0
  (3):\penalty0 332--341, March 1992.
\newblock \doi{10.1109/9.119632}.

\bibitem[Gerencsér(1997)]{Gerencsr1997}
László Gerencsér.
\newblock \emph{Rate of Convergence of Moments of Spall’s SPSA Method}, pages
  67--75.
\newblock Birkh\"{a}user Boston, 1997.
\newblock \doi{10.1007/978-1-4612-1980-4_7}.

\bibitem[Kleinman et~al.(1999)Kleinman, Spall, and Naiman]{Kleinman1999}
Nathan~L. Kleinman, James~C. Spall, and Daniel~Q. Naiman.
\newblock Simulation-based optimization with stochastic approximation using
  common random numbers.
\newblock \emph{Management Science}, 45\penalty0 (11):\penalty0 1570--1578,
  November 1999.
\newblock \doi{10.1287/mnsc.45.11.1570}.

\bibitem[IBMQ(2024)]{QISKIT-SPSA-2024}
IBMQ.
\newblock {IBM Quantum Documentation: SPSA}, 2024.
\newblock Available at
  \url{https://docs.quantum.ibm.com/api/qiskit/qiskit.algorithms.optimizers.SPSA}
  accessed on Jan.~12, 2024.

\bibitem[{Qiskit Contributors}(2023)]{Qiskit}
{Qiskit Contributors}.
\newblock Qiskit: An open-source framework for quantum computing, 2023.

\bibitem[Menickelly(2024)]{menickelly2024estimating}
Matt Menickelly.
\newblock Estimating computational noise on parametric curves.
\newblock \emph{arXiv:2401.10757}, 2024.
\newblock \doi{10.48550/arXiv.2401.10757}.

\bibitem[Zhang et~al.(2022)Zhang, Chen, Yuan, and Yin]{zhang2022variational}
Dan-Bo Zhang, Bin-Lin Chen, Zhan-Hao Yuan, and Tao Yin.
\newblock Variational quantum eigensolvers by variance minimization.
\newblock \emph{Chinese Physics B}, 31\penalty0 (12):\penalty0 120301, 2022.
\newblock \doi{10.1088/1674-1056/ac8a8d}.

\end{thebibliography}
\end{document}